\documentclass[preprint,12pt]{elsarticle}

\usepackage{amsmath,amssymb,amsthm,mathtools}
\usepackage{bm}
\usepackage{graphicx}
\usepackage{booktabs}
\usepackage{hyperref}

\mathtoolsset{showonlyrefs}

\newtheorem{theorem}{Theorem}[section]
\newtheorem{proposition}[theorem]{Proposition}
\newtheorem{lemma}[theorem]{Lemma}
\newtheorem{corollary}[theorem]{Corollary}
\newtheorem{assumption}[theorem]{Assumption}
\theoremstyle{definition}

\theoremstyle{remark}
\newtheorem{remark}[theorem]{Remark}

\newcommand{\R}{\mathbb R}
\newcommand{\Om}{\Omega}

\newcommand{\eps}{\varepsilon}
\newcommand{\sig}{\sigma}
\newcommand{\lamR}{\lambda_R}
\newcommand{\divg}{\operatorname{div}}

\newcommand{\sgn}{\operatorname{sgn}}
\newcommand{\norm}[1]{\left\lVert #1 \right\rVert}
\newcommand{\abs}[1]{\left\lvert #1 \right\rvert}
\newcommand{\dd}{\,\mathrm d}

\begin{document}

\begin{frontmatter}

\title{Matched asymptotics of Rayleigh-wave fields near cuspidal ridges and gorges}

\author[inst1]{O.M. Kiselev}
\address[inst1]{Innopolis university, Innopolis, Russia \\  o.kiselev@innopolis.ru}

\begin{abstract}
We construct a local matched-asymptotic description of time-harmonic elastic fields generated by Rayleigh waves near cuspidal elements of a traction-free surface. The free surface is represented locally by a cusp graph with exponent $0<\alpha<1$, or equivalently by a vanishing-width horn $b(s)=B s^m$, $m=1/\alpha>1$.
A cuspidal gorge is a zero-opening re-entrant notch: its leading field is the Williams crack-tip field, and the stresses behave as $r^{-1/2}$. The cusp exponent affects the gorge through lower-order corrections and through the stress cut-off produced by rounding the bottom. In contrast, a cuspidal ridge behaves as an elastic horn with vanishing width. The leading admissible free-tip field is asymptotically rigid (bounded stress), distinct from the high-energy branch supported by a finite tip truncation, where stresses grow as $\sigma \sim \rho^{-m}$.  Finite-element calculations for the local static Lam\'e problems support these predictions: the free-tip ridge test confirms the absence of crack-like growth, the truncated ridge recovers the high-energy law, and the gorge stress slope is found to be close to $-1/2$.
\end{abstract}

\begin{keyword}
Rayleigh waves \sep cusp topography \sep matched asymptotic expansions \sep elastic waves \sep Williams singularity \sep finite elements
\end{keyword}

\end{frontmatter}

\section{Introduction}

Rayleigh waves are controlled by the traction-free boundary.
This makes them useful for near-surface and crustal imaging, but it also makes them sensitive to topography.
Ignoring the free-surface geometry can introduce phase, amplitude and velocity artefacts in surface-wave tomography and full-waveform inversion \cite{Nuber2016,Li2019,Du2024}.
Steep topography can also scatter Rayleigh waves and contaminate receiver-function images \cite{Lu2022}.
Numerical studies of diffraction by mountains, cavities and irregular surfaces show that topographic features may generate diffracted and creeping wavefields \cite{Luzon1997}.

Most wave-propagation studies treat topography as a finite-frequency scattering object: a hill, a canyon, a basin edge or a rough surface.
That point of view is necessary for data modelling.
The question addressed here is more local:
\emph{what asymptotic class does the elastic field belong to near a cuspidal element of a traction-free boundary?}
Two cases must be separated.
A cuspidal ridge is a material horn with vanishing cross-section.
A cuspidal gorge is a re-entrant notch with zero opening angle.
Although these two geometries may be represented by opposite signs of a similar cusp graph, they are not equivalent as local elastic domains.

The ridge case also separates a cusp from an ordinary wedge or cone.
The basic mechanism is energy-flux concentration.
If a Rayleigh-wave packet is guided by a narrow ridge which is slowly varying
on the wavelength scale, then
\[
    \mathcal P(s)
    \simeq
    A_{\rm sec}(s)c_g\mathcal E(s)
    \simeq
    A_{\rm sec}(s)c_g\rho_0 |\dot{\bm U}(s)|^2
    \simeq {\rm const}.
\]
Thus
\[
    |\dot{\bm U}(s)|\asymp A_{\rm sec}(s)^{-1/2}.
\]
For a cusp $A_{\rm sec}(s)\asymp s^m$, $m>1$, this predicts stronger
amplification than in a wedge or cone.  The argument is only a guide: it breaks
down when the local geometric scale reaches the Rayleigh wavelength and it does
not decide whether the endpoint can transmit a finite force or moment
resultant.  If such a resultant is imposed on a truncated section, then
\[
    \sigma_{\rm mean}(s)
    \asymp
    A_{\rm sec}(s)^{-1}
    \asymp s^{-m}.
\]
For $m>1$ this channel has the power-law energy concentration
$E(\delta)\asymp\delta^{1-m}$.  The asymptotic analysis below separates this
high-energy tip-load branch from the admissible free-tip branch.

The paper constructs the local part of a Rayleigh-wave scattering theory:
the outer Rayleigh field supplies the incident loading, the inner cusp problem
selects the admissible singular structures, and matching fixes their
amplitudes.  The emphasis is on formal matched asymptotics and computable
scaling laws, not on a full spectral theory of cuspidal elastic waveguides.

The main conclusion is asymmetric.
For a free cuspidal ridge, the non-destructive limit of truncated horns forces the tip force and moment resultants to vanish; a non-zero resultant requires a finite cut-off of the tip.
Consequently no universal Williams-type stress blow-up is produced by the leading local free-tip field.
For a cuspidal gorge, the local limit is the crack-tip problem and the stress has the leading order $r^{-1/2}$.
This distinction gives a practical diagnostic for numerical simulations of Rayleigh waves near sharp topographic elements.
The finite-strength interpretation is consistent with the geomorphological
idea of threshold hillslopes and limits to relief: steep mountain topography is
controlled by rock-mass strength and landsliding rather than by an arbitrary
geometric continuation of slopes to a singular point
\cite{SchmidtMontgomery1995,Burbank1996ThresholdHillslopes,Montgomery2001ThresholdHillslopes,LarsenMontgomery2012LandslideErosion}.

\section{Relation to previous work}

Topographic effects on Rayleigh waves are well documented in boundary-integral,
finite-difference and full-waveform studies
\cite{BouchonBarker1996,Luzon1997,Robertsson1996,Pan2018}, and they are also
important in surface-wave inversion \cite{Nuber2016,Li2019,Du2024}.  The
formation of Rayleigh waves at a traction-free boundary is classical
\cite{AkiRichards2002}; here the issue is how a cuspidal boundary point changes
the local asymptotic class.

The local singularity background is also substantial.  The Williams expansion
gives the classical $r^{-1/2}$ law for angular corners and cracks
\cite{Williams1952}, while sharp-defect, cusp-inclusion and beak-tip
asymptotics in elasticity were studied in
\cite{MovchanNazarov1992SharpDefects,NazarovPolyakova1993BeakTip}.  More
generally, weighted expansions and intensity factors in singular domains are
standard tools \cite{KozlovMazyaRossmann1997,MazyaNazarovPlamenevskij2000VolI,MazyaNazarovPlamenevskij2000VolII}.
For elastic cusps, the work of Nazarov and coauthors is especially relevant:
peak-shaped and beak-shaped cusps may exhibit essential spectrum, cusp waves,
radiation conditions at the tip, and weighted Korn inequalities
\cite{Nazarov2008SpikedElasticity,BakharevNazarov2009SuperSharpSpike,Nazarov2009EssentialSpectrumCusp,CardoneNazarovTaskinen2009BeakElastic,Nazarov2012WeightedKorn,KozlovNazarov2016ElasticSolidCusps,KozlovNazarov2018ElasticCusp}.
Related radiation conditions for water waves and cusp asymptotics for
parabolic and Laplace--Beltrami problems appear in
\cite{NazarovTaskinen2011WaterWavesCusp,AntonioukKiselevTarkhanov2015,KiselevShestakov2010,GlebovKiselevTarkhanov2018VolII}.

The present paper uses this asymptotic background but keeps a narrower target:
the local Rayleigh-wave loading of two two-dimensional topographic cusps.  It
does not claim a new general theory of elastic cusps or a replacement for the
weighted Korn and spectral results cited above.  Its contribution is the
Rayleigh-wave topographic classification: a cuspidal ridge selects a horn-type
zero-tip-load class unless a finite truncation carries a resultant, whereas a
cuspidal gorge selects the Williams crack-tip class.  This distinction is also
consistent with threshold-slope geomorphology, where steep mountain relief is
limited by rock-mass strength and landsliding
\cite{SchmidtMontgomery1995,Burbank1996ThresholdHillslopes,Montgomery2001ThresholdHillslopes,LarsenMontgomery2012LandslideErosion}.

\section{Local geometry}

Let $\bm X=(x,z)\in\R^2$, with the cusp point placed at the origin.
In this section $z\ge0$ is measured along the local cusp axis and $x$ is
transverse to it.
The cusp half-width is
\begin{equation}
    b(z)=B z^m,\qquad B>0,\qquad m>1 .
    \label{eq:width}
\end{equation}
Equivalently, if the same cusp is represented by a graph
\begin{equation}
    z=A\abs{x}^{\alpha},
    \qquad 0<\alpha<1,
    \label{eq:graph-cusp}
\end{equation}
then
\begin{equation}
    m=\frac1\alpha,\qquad B=A^{-1/\alpha}.
    \label{eq:m-alpha}
\end{equation}
The case $m=1$ is a wedge.
The case $m>1$ is genuinely cuspidal because
\begin{equation}
    \chi(z)=\frac{b(z)}{z}=B z^{m-1}\to0,
    \qquad z\downarrow0 .
    \label{eq:slenderness}
\end{equation}

We distinguish two local domains.
The cuspidal ridge is the material horn
\begin{equation}
    \mathcal C_+
    =
    \{(x,z):0<z<\ell,\ \abs{x}<b(z)\}.
    \label{eq:ridge-domain}
\end{equation}
The cuspidal gorge is the complement of a cuspidal notch,
\begin{equation}
    \mathcal C_-
    =
    B_\ell(0)\setminus
    \{(x,z):0<z<\ell,\ \abs{x}<b(z)\}.
    \label{eq:gorge-domain}
\end{equation}
Figure~\ref{fig:local-geometries} shows the two geometries.

\begin{figure}[t]
    \centering
    \includegraphics[width=0.95\textwidth]{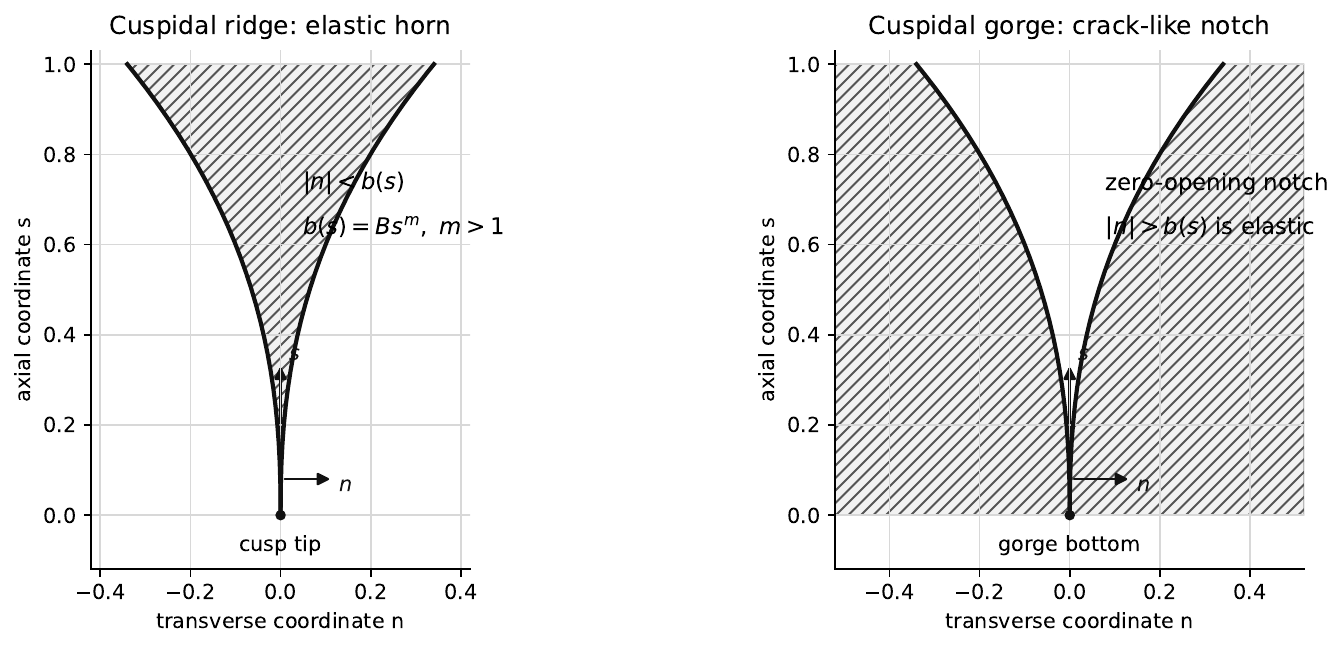}
    \caption{Local cusp geometries. Left: a cuspidal ridge is a vanishing-width elastic horn. Right: a cuspidal gorge is a zero-opening re-entrant notch.}
    \label{fig:local-geometries}
\end{figure}

\subsection{Frenet representation of the cusp branches}

The natural coordinates used below are not introduced for an arbitrary smooth curve only; they are attached to the two smooth branches of the chosen cusp.
For the graph form
\[
    z=A|x|^\alpha,\qquad 0<\alpha<1,
\]
write the two branches as
\begin{equation}
    \bm{\Gamma}_\delta(q)=(\delta q,Aq^\alpha),
    \qquad q>0,\qquad \delta=\pm1.
    \label{eq:cusp-branch-parametrization}
\end{equation}
Here $q=|x|$ is a branch parameter, not the arclength.
Set
\begin{equation}
    p(q)=A\alpha q^{\alpha-1},
    \qquad
    L(q)=\sqrt{1+p(q)^2}.
    \label{eq:cusp-p-L}
\end{equation}
The arclength from the cusp point to $\bm{\Gamma}_\delta(q)$ is
\begin{equation}
    s(q)=\int_0^q L(\eta)\,\dd\eta .
    \label{eq:cusp-arclength}
\end{equation}
As $q\downarrow0$,
\begin{equation}
    s(q)=Aq^\alpha+
    \frac{q^{2-\alpha}}{2A\alpha(2-\alpha)}
    +o(q^{2-\alpha}).
    \label{eq:cusp-arclength-asymptotic}
\end{equation}
With the orientation away from the cusp, the Frenet frame on the branch is
\begin{equation}
    \bm{\tau}_\delta(q)
    =
    \frac{(\delta,p(q))}{L(q)},
    \qquad
    \bm{\nu}_\delta(q)
    =
    \frac{(-\delta p(q),1)}{L(q)}.
    \label{eq:cusp-tangent-normal}
\end{equation}
This choice gives the Frenet formulas
\[
    \frac{\dd \bm{\tau}_\delta}{\dd s}
    =
    \kappa_\delta(s)\bm{\nu}_\delta,
    \qquad
    \frac{\dd \bm{\nu}_\delta}{\dd s}
    =
    -\kappa_\delta(s)\bm{\tau}_\delta .
\]
The signed curvature is
\begin{equation}
    \kappa_\delta(s(q))
    =
    \frac{A\alpha(\alpha-1)q^{\alpha-2}}
    {\left(1+A^2\alpha^2q^{2\alpha-2}\right)^{3/2}}.
    \label{eq:cusp-signed-curvature}
\end{equation}
It is independent of $\delta$ for the above orientation.
Near the cusp this can be written as
\begin{equation}
    \kappa_\delta(s(q))
    \sim
    \frac{\alpha-1}{A^2\alpha^2}q^{1-2\alpha},
    \qquad q\downarrow0.
    \label{eq:cusp-curvature-asymptotic}
\end{equation}
Thus the curvature is singular for $\alpha>1/2$, finite to leading order for $\alpha=1/2$, and vanishes for $\alpha<1/2$, although the tangent still has a cusp limit.

The natural coordinates around a chosen branch are therefore
\begin{equation}
    \bm{X}_\delta(s,n)
    =
    \bm{\Gamma}_\delta(q(s))
    +
    n\bm{\nu}_\delta(q(s)),
    \qquad
    H_\delta(s,n)
    =
    1-\kappa_\delta(s)n .
    \label{eq:cusp-natural-coordinates}
\end{equation}
Here $s$ is arclength along the cusp branch and $n$ is signed normal distance.
The coordinate map is locally non-degenerate as long as $H_\delta(s,n)\ne0$.
Its coordinate vectors are $(\bm{X}_\delta)_s=H_\delta\bm{\tau}_\delta$ and
$(\bm{X}_\delta)_n=\bm{\nu}_\delta$, so
\[
    |\dd\bm{X}|^2=H_\delta^2\,\dd s^2+\dd n^2.
\]

The wave problem attaches a distinguished length to this geometry.  The
surface is stationary, so the angular frequency $\omega$ is prescribed by the
incident wave and is not changed by the topography.  The flat half-space
Rayleigh wavelength at this frequency is
\begin{equation}
    \lamR=\frac{2\pi c_R}{\omega},
    \label{eq:rayleigh-wavelength}
\end{equation}
where $c_R$ is the Rayleigh phase velocity for a flat traction-free isotropic
half-space.  Topography changes the spatial field at the same frequency:
phase, amplitude, local wave numbers and mode conversion may change, while the
time factor remains fixed.  We therefore allow the Rayleigh part of the local
field to have an effective wavelength
\begin{equation}
    m_\lambda=\frac{\lambda_R}{\lambda_{\rm eff}},
    \qquad
    \frac{\lambda_{\rm eff}}{\lambda_R}=\frac1{m_\lambda}.
    \label{eq:effective-wavelength-ratio}
\end{equation}
The dependence of $m_\lambda$ on the cusp geometry is determined later from
the solvability conditions for the outer expansion.

In the next section we suppress the branch index $\delta$ and write
$H(s,n)=1-\kappa(s)n$.

\section{Rayleigh equations in natural surface coordinates}

All equations in this section are written for the two-dimensional plane-strain
model.  The spatial point is
$\bm X=(X_1,X_2)\in\mathbb R^2$, the elastic body is
$\Om\subset\mathbb R^2$, and $t$ denotes time.  The parameter
$\omega>0$ is the angular frequency.  The physical displacement is the real
vector field
\[
    \bm u:\Om\times\mathbb R\to\mathbb R^2 .
\]
For a time-harmonic Rayleigh wave we write it through a complex vector
amplitude $\bm U(\cdot;\omega):\Om\to\mathbb C^2$ as
\[
    \bm u(\bm X,t)
    =
    \operatorname{Re}\{\bm U(\bm X;\omega)e^{-i\omega t}\}
    =
    \frac12\left(
        \bm U(\bm X;\omega)e^{-i\omega t}
        +
        \overline{\bm U(\bm X;\omega)}e^{i\omega t}
    \right).
\]
Thus $\bm U$ is a two-component complex displacement amplitude, not the
physical displacement itself.  The second term in the last formula is the
complex conjugate negative-frequency component required by the reality of
$\bm u$.
In this local-coordinate derivation, bold symbols denote vectors or vector-valued maps:
$\bm X,\bm U,\bm u,\bm{\tau},\bm{\nu},\bm n$.
Plain italic symbols such as $u$, $v$, $s$, $n$, $H$, $\kappa$ and $\omega$
are scalars.  In particular, $n$ is the scalar normal coordinate, whereas
$\bm n$ is the unit normal vector.
The stress and strain recovered from $\bm U$ are written as $\sig(\bm U)$
and $e(\bm U)$; without indices they are $2\times2$ tensors, while
$\sigma_{ij}$ and $e_{ij}$ denote their scalar components.
The displacement amplitude satisfies the Navier equation
\begin{equation}
    \mu\Delta\bm U+(\lambda+\mu)\nabla(\divg\bm U)
    +\rho\omega^2\bm U=0,
    \qquad \bm X\in\Om,
    \label{eq:dynamic-lame}
\end{equation}
The velocity $c_R$ in \eqref{eq:rayleigh-wavelength} is determined by the flat
traction-free half-space.  With
\[
    c_P=\sqrt{\frac{\lambda+2\mu}{\rho}},
    \qquad
    c_S=\sqrt{\frac{\mu}{\rho}},
\]
it is the root $0<c_R<c_S$ of the Rayleigh equation
\[
    \left(2-\frac{c_R^2}{c_S^2}\right)^2
    =
    4
    \sqrt{1-\frac{c_R^2}{c_P^2}}
    \sqrt{1-\frac{c_R^2}{c_S^2}} .
\]
After all lengths are normalized by $\lambda_R$, \eqref{eq:dynamic-lame}
becomes
\begin{equation}
    \mu\Delta\bm U
    +
    (\lambda+\mu)\nabla(\divg\bm U)
    +
    4\pi^2\rho c_R^2\bm U=0.
    \label{eq:outer-wave-scale}
\end{equation}
Thus the normalized outer problem is still elastodynamic; the static equations
appear only after a further inner scaling near the cusp.
The stress is then reconstructed by the constitutive relation
\begin{equation}
    \sig(\bm U)=\lambda(\divg\bm U)I+2\mu e(\bm U),
    \qquad
    e(\bm U)=\frac12(\nabla\bm U+(\nabla\bm U)^T).
    \label{eq:stress}
\end{equation}
Here $e(\bm U)$ and $\sig(\bm U)$ are $2\times2$ strain and stress tensors,
$I$ is the $2\times2$ identity matrix, $\lambda$ and $\mu$ are the Lame
moduli, and $\rho$ is the mass density.
The free surface is traction-free:
\begin{equation}
    \sig(\bm U)\bm n=0 .
    \label{eq:traction-free}
\end{equation}

\subsection{Displacement equations}

The displacement is decomposed as
\begin{equation}
    \bm U(s,n)=u(s,n)\bm{\tau}(s)+v(s,n)\bm{\nu}(s).
    \label{eq:surface-displacement-components}
\end{equation}
In this formula $u$ and $v$ are scalar complex amplitudes of the tangential
and normal components of the vector $\bm U$.
The strain components are
\begin{equation}
\begin{aligned}
    e_{ss}&=\frac{u_s-\kappa v}{H},\\
    e_{nn}&=v_n,\\
    2e_{sn}&=u_n+\frac{v_s+\kappa u}{H}.
\end{aligned}
    \label{eq:surface-strains}
\end{equation}
We now insert the Rayleigh-wavelength normalization into these natural
coordinates.  The local graph is written as
\begin{equation}
    z=a f(x),\qquad f(x)=\abs{x}^{\alpha},
    \qquad
    a=A\lamR^{\alpha-1}.
    \label{eq:dimensionless-cusp-a}
\end{equation}
The parameter $a$ links the geometric coefficient $A$ to the Rayleigh
wavelength.  Long waves correspond to small $a$.  On each smooth branch,
\begin{equation}
    s_a(x)=\int_0^x\sqrt{1+a^2f'(\eta)^2}\,\dd\eta,
    \label{eq:arc-length-a}
\end{equation}
\begin{equation}
    \kappa_a(s_a(x))
    =
    \frac{a f''(x)}
    {\left(1+a^2f'(x)^2\right)^{3/2}},
    \qquad
    h_a(s,n)=1-\kappa_a(s)n .
    \label{eq:curvature-with-a}
\end{equation}
The effective wavelength ratio introduced in
\eqref{eq:effective-wavelength-ratio} now becomes a geometry-dependent scalar
\begin{equation}
    m_\lambda=m_a
    =
    1+\sum_{j=1}^{\infty}a^j m_j .
    \label{eq:wavenumber-ratio-series}
\end{equation}
The coefficients $m_j$ are numbers, not functions of $s$; they are determined
order by order from the Fredholm solvability conditions.  Since the coordinate
$s$ is normalized by the reference wavelength $\lambda_R$ rather than by
$\lambda_{\rm eff}$, the tangential derivative acting on the Rayleigh-scale
outer field is
\begin{equation}
    D_a:=m_a\partial_s .
    \label{eq:effective-tangential-derivative}
\end{equation}
Put
\[
    \Omega_R=4\pi^2\rho c_R^2 .
\]
Thus
\begin{equation}
\begin{aligned}
    \sig^a_{ss}
    &=
    (\lambda+2\mu)\frac{D_au-\kappa_a v}{h_a}
    +
    \lambda v_n,
    \\
    \sig^a_{nn}
    &=
    \lambda\frac{D_au-\kappa_a v}{h_a}
    +
    (\lambda+2\mu)v_n,
    \\
    \sig^a_{sn}
    &=
    \mu
    \left(
        u_n+\frac{D_av+\kappa_a u}{h_a}
    \right).
\end{aligned}
    \label{eq:stress-with-a}
\end{equation}
Substitution into the normalized Navier equation gives the Rayleigh system
for the displacement components:
\begin{equation}
\begin{aligned}
    &\frac1{h_a}D_a
    \left[
        (\lambda+2\mu)\frac{D_au-\kappa_a v}{h_a}
        +
        \lambda v_n
    \right]
    +
    \partial_n
    \left[
        \mu\left(
            u_n+\frac{D_av+\kappa_a u}{h_a}
        \right)
    \right]
    \\
    &\qquad
    -
    \frac{2\kappa_a\mu}{h_a}
    \left(
        u_n+\frac{D_av+\kappa_a u}{h_a}
    \right)
    +
    \Omega_Ru
    =
    0,
    \\
    &\frac1{h_a}D_a
    \left[
        \mu\left(
            u_n+\frac{D_av+\kappa_a u}{h_a}
        \right)
    \right]
    +
    \partial_n
    \left[
        \lambda\frac{D_au-\kappa_a v}{h_a}
        +
        (\lambda+2\mu)v_n
    \right]
    \\
    &\qquad
    +
    \frac{2\mu\kappa_a}{h_a}
    \left(
        \frac{D_au-\kappa_a v}{h_a}
        -
        v_n
    \right)
    +
    \Omega_Rv
    =
    0.
\end{aligned}
    \label{eq:rayleigh-equations-with-a}
\end{equation}
At $n=0$ the traction-free condition $\sig(\bm U)\bm n=0$ becomes, in
displacement variables,
\begin{equation}
    \mu(u_n+D_av+\kappa_a u)=0,
    \qquad
    \lambda(D_au-\kappa_a v)+(\lambda+2\mu)v_n=0
    \quad (n=0).
    \label{eq:free-boundary-with-a}
\end{equation}
\section{Outer expansion away from the cusp}

For fixed $s\ne0$,
\begin{equation}
    \kappa_a(s)=a f''(s)+O(a^3),\qquad
    h_a(s,n)=1-a f''(s)n+O(a^3).
    \label{eq:small-a-geometry}
\end{equation}
At the same time,
\begin{equation}
    D_a
    =
    \partial_s+\sum_{j=1}^{\infty}a^j m_j\partial_s
    =
    \partial_s+a m_1\partial_s+O(a^2).
    \label{eq:Da-small-a}
\end{equation}
The leading equations are therefore the flat Rayleigh half-space equations
\begin{equation}
\begin{aligned}
    (\lambda+2\mu)u_{ss}
    +
    \mu u_{nn}
    +
    (\lambda+\mu)v_{sn}
    +
    \Omega_Ru
    &=0,
    \\
    \mu v_{ss}
    +
    (\lambda+2\mu)v_{nn}
    +
    (\lambda+\mu)u_{sn}
    +
    \Omega_Rv
    &=0,
\end{aligned}
    \label{eq:plane-lame-components}
\end{equation}
with
\begin{equation}
    \mu(u_n+v_s)=0,\qquad
    \lambda u_s+(\lambda+2\mu)v_n=0
    \quad (n=0).
    \label{eq:flat-rayleigh-boundary}
\end{equation}
The passage to the flat problem is non-uniform at the cusp.
Indeed, for $f(s)=|s|^\alpha$,
\begin{equation}
    f'(s)=\alpha\,\sgn(s)|s|^{\alpha-1},
    \qquad
    f''(s)=\alpha(\alpha-1)|s|^{\alpha-2},
    \label{eq:cusp-derivatives-singular}
\end{equation}
so a regular small-$a$ expansion requires at least
\begin{equation}
    a|s|^{\alpha-1}\ll1,
    \qquad
    a|s|^{\alpha-2}|n|\ll1 .
    \label{eq:cusp-small-a-validity}
\end{equation}

It is useful to write the operator expansion explicitly.
Set
\begin{equation}
    r_a=\frac1{h_a},\qquad
    \gamma_a=\frac{\kappa_a}{h_a}.
    \label{eq:ra-gammaa-definitions}
\end{equation}
For fixed $s\ne0$,
\begin{equation}
    r_a=\sum_{j=0}^\infty a^j r_j,\qquad
    \gamma_a=\sum_{j=0}^\infty a^j \gamma_j .
    \label{eq:geometry-taylor-series}
\end{equation}
The first coefficients are
\begin{equation}
    r_0=1,\quad \gamma_0=0,\qquad
    r_1=n k(s),\quad \gamma_1=k(s),
    \label{eq:r1-gamma1}
\end{equation}
where
\begin{equation}
    k(s)=f''(s)=\alpha(\alpha-1)|s|^{\alpha-2}.
    \label{eq:k-leading-curvature}
\end{equation}
We also write
\begin{equation}
    \kappa_a(s)=\sum_{j=1}^{\infty}a^j\kappa_j(s),
    \qquad
    \kappa_1(s)=k(s).
    \label{eq:curvature-taylor-series}
\end{equation}
For a vector $W=(u,v)^T$ define the full displacement operator
$\mathcal L_a W=((\mathcal L_a W)_1,(\mathcal L_a W)_2)^T$ by the left-hand
side of \eqref{eq:rayleigh-equations-with-a}:
\begin{equation}
\begin{aligned}
    (\mathcal L_aW)_1
    &=
    r_aD_a
    \left[
        (\lambda+2\mu)r_a(D_au-\kappa_a v)
        +
        \lambda v_n
    \right]
    +
    \partial_n
    \left[
        \mu\left(u_n+r_a(D_av+\kappa_a u)\right)
    \right]
    \\
    &\quad
    -
    2\mu\gamma_a
    \left(u_n+r_a(D_av+\kappa_a u)\right)
    +
    \Omega_Ru,
    \\
    (\mathcal L_aW)_2
    &=
    r_aD_a
    \left[
        \mu\left(u_n+r_a(D_av+\kappa_a u)\right)
    \right]
    +
    \partial_n
    \left[
        \lambda r_a(D_au-\kappa_a v)
        +
        (\lambda+2\mu)v_n
    \right]
    \\
    &\quad
    +
    2\mu\gamma_a
    \left[
        r_a(D_au-\kappa_a v)-v_n
    \right]
    +
    \Omega_Rv.
\end{aligned}
    \label{eq:full-operator-La}
\end{equation}
The boundary operator at the free surface is
\begin{equation}
    \mathcal B_aW
    =
    \left.
    \begin{pmatrix}
        \mu(u_n+D_av+\kappa_a u)\\
        \lambda(D_au-\kappa_a v)+(\lambda+2\mu)v_n
    \end{pmatrix}
    \right|_{n=0}.
    \label{eq:full-boundary-operator-Ba}
\end{equation}
The coefficient operators are defined by the formal Taylor expansions
\begin{equation}
    \mathcal L_a=\sum_{p=0}^{\infty}a^p\mathcal L_p,
    \qquad
    \mathcal B_a=\sum_{p=0}^{\infty}a^p\mathcal B_p .
    \label{eq:operator-taylor-series}
\end{equation}
Equivalently, $\mathcal L_pW$ and $\mathcal B_pW$ are the coefficients of
$a^p$ in \eqref{eq:full-operator-La} and \eqref{eq:full-boundary-operator-Ba}.
In particular,
\begin{equation}
    \mathcal L_0W
    =
    \begin{pmatrix}
        (\lambda+2\mu)u_{ss}+\mu u_{nn}+(\lambda+\mu)v_{sn}+\Omega_Ru\\
        \mu v_{ss}+(\lambda+2\mu)v_{nn}+(\lambda+\mu)u_{sn}+\Omega_Rv
    \end{pmatrix},
    \label{eq:L0-explicit}
\end{equation}
and
\begin{equation}
    \mathcal B_0W
    =
    \left.
    \begin{pmatrix}
        \mu(u_n+v_s)\\
        \lambda u_s+(\lambda+2\mu)v_n
    \end{pmatrix}
    \right|_{n=0}.
    \label{eq:B0-explicit}
\end{equation}
Here $\mathcal L_0,\mathcal B_0$ already contain the leading value $m_0=1$ of
the wavelength ratio through $D_a=\partial_s+O(a)$.
The geometry-only part of the first-order coefficient is
\begin{equation}
\begin{aligned}
    (\mathcal L_1^{\rm geom}W)_1
    &=
    nk\,\partial_s\bigl((\lambda+2\mu)u_s+\lambda v_n\bigr)
    +
    \partial_s\bigl((\lambda+2\mu)k(nu_s-v)\bigr)
    \\
    &\quad
    +
    \partial_n\bigl(\mu k(nv_s+u)\bigr)
    -
    2\mu k(u_n+v_s),
    \\
    (\mathcal L_1^{\rm geom}W)_2
    &=
    nk\,\partial_s\bigl(\mu(u_n+v_s)\bigr)
    +
    \partial_s\bigl(\mu k(nv_s+u)\bigr)
    \\
    &\quad
    +
    \partial_n\bigl(\lambda k(nu_s-v)\bigr)
    +
    2\mu k(u_s-v_n),
\end{aligned}
    \label{eq:L1-geom-explicit}
\end{equation}
while
\begin{equation}
    \mathcal B_1^{\rm geom}W
    =
    \left.
    \begin{pmatrix}
        \mu k u\\
        -\lambda k v
    \end{pmatrix}
    \right|_{n=0}.
    \label{eq:B1-geom-rayleigh}
\end{equation}

The part of the first-order coefficient generated by the wavelength ratio
$m_a=1+a m_1+O(a^2)$ is isolated as follows.  If $\nu$ is a scalar first
variation of this ratio, define
\begin{equation}
\begin{aligned}
    (\mathcal P_\nu W)_1
    &=
    \nu\,\partial_s\bigl((\lambda+2\mu)u_s+\lambda v_n\bigr)
    +
    \partial_s\bigl((\lambda+2\mu)\nu u_s\bigr)
    +
    \partial_n\bigl(\mu\nu v_s\bigr),
    \\
    (\mathcal P_\nu W)_2
    &=
    \nu\,\partial_s\bigl(\mu(u_n+v_s)\bigr)
    +
    \partial_s\bigl(\mu\nu v_s\bigr)
    +
    \partial_n\bigl(\lambda\nu u_s\bigr),
\end{aligned}
    \label{eq:wavenumber-first-operator}
\end{equation}
and
\begin{equation}
    \mathcal Q_\nu W
    =
    \left.
    \begin{pmatrix}
        \mu\nu v_s\\
        \lambda\nu u_s
    \end{pmatrix}
    \right|_{n=0}.
    \label{eq:wavenumber-first-boundary}
\end{equation}
Then
\begin{equation}
    \mathcal L_1W=\mathcal L_1^{\rm geom}W+\mathcal P_{m_1}W,
    \qquad
    \mathcal B_1W=\mathcal B_1^{\rm geom}W+\mathcal Q_{m_1}W .
    \label{eq:first-order-splitting}
\end{equation}

Let
\begin{equation}
    W^a(s,n)\sim\sum_{j=0}^{\infty}a^jW_j(s,n),
    \qquad W_j=(u_j,v_j)^T .
    \label{eq:displacement-a-series}
\end{equation}
This is a complex-amplitude expansion.  The corresponding real physical
displacement components in the moving basis are
\[
    W^a_{\rm phys}(s,n,t)
    =
    \frac12
    \sum_{j=0}^{\infty}a^j
    \left(
        W_j(s,n)e^{-i\omega t}
        +
        \overline{W_j(s,n)}e^{i\omega t}
    \right).
\]
The actual vector displacement is obtained from these real components by
multiplying by the real basis vectors $\bm{\tau}(s)$ and $\bm{\nu}(s)$.
After substituting this series and \eqref{eq:wavenumber-ratio-series} into
\eqref{eq:rayleigh-equations-with-a} and \eqref{eq:free-boundary-with-a}, one obtains
\begin{equation}
    \mathcal L_0W_0=0,\qquad
    \mathcal B_0W_0=0,
    \label{eq:a-leading-problem}
\end{equation}
\begin{equation}
    \mathcal L_0W_1=-\mathcal L_1W_0,\qquad
    \mathcal B_0W_1=-\mathcal B_1W_0,
    \label{eq:a-first-correction-problem}
\end{equation}
Equivalently, the first correction solves
\begin{equation}
    \mathcal L_0W_1=F_1,\qquad
    \mathcal B_0W_1=G_1,
    \label{eq:first-correction-inhomogeneous}
\end{equation}
where
\begin{equation}
    F_1=F_1^{\rm geom}-\mathcal P_{m_1}W_0,\qquad
    G_1=G_1^{\rm geom}-\mathcal Q_{m_1}W_0,
    \label{eq:first-correction-data}
\end{equation}
and
\[
    F_1^{\rm geom}=-\mathcal L_1^{\rm geom}W_0,\qquad
    G_1^{\rm geom}=-\mathcal B_1^{\rm geom}W_0 .
\]
The geometry-only interior data are
\begin{equation}
\begin{aligned}
    (F_1^{\rm geom})_1
    &=
    -nk\,\partial_s\bigl((\lambda+2\mu)u_{0,s}+\lambda v_{0,n}\bigr)
    -
    \partial_s\bigl((\lambda+2\mu)k(nu_{0,s}-v_0)\bigr)
    \\
    &\quad
    -
    \partial_n\bigl(\mu k(nv_{0,s}+u_0)\bigr)
    +
    2\mu k(u_{0,n}+v_{0,s}),
    \\
    (F_1^{\rm geom})_2
    &=
    -nk\,\partial_s\bigl(\mu(u_{0,n}+v_{0,s})\bigr)
    -
    \partial_s\bigl(\mu k(nv_{0,s}+u_0)\bigr)
    \\
    &\quad
    -
    \partial_n\bigl(\lambda k(nu_{0,s}-v_0)\bigr)
    -
    2\mu k(u_{0,s}-v_{0,n}),
\end{aligned}
    \label{eq:first-correction-F1-geom}
\end{equation}
and the geometry-only boundary data are
\begin{equation}
    G_1^{\rm geom}
    =
    \left.
    \begin{pmatrix}
        -\mu k u_0\\
        \lambda k v_0
    \end{pmatrix}
    \right|_{n=0}.
    \label{eq:first-correction-G1-geom}
\end{equation}

\begin{lemma}[Fredholm alternative for the first correction]
\label{lem:first-correction-solvability}
Let $0<\delta<R$ and let $D_{\delta,R,H}$ be a finite strip of the flat
half-space in natural coordinates,
\[
    D_{\delta,R,H}=\{(s,n):\delta<|s|<R,\ 0<n<H\},
\]
with the free boundary at $n=0$ and with fixed homogeneous auxiliary
conditions on the artificial sides.
If $W_0$ is a smooth solution of the flat Rayleigh problem
\eqref{eq:a-leading-problem} in a neighbourhood of $\overline D_{\delta,R,H}$,
then the first-correction problem \eqref{eq:first-correction-inhomogeneous}
is solvable if and only if its data satisfy the Fredholm compatibility
conditions
\begin{equation}
    \int_{D_{\delta,R,H}} F_1\cdot\overline{\Phi}\,\dd s\,\dd n
    +
    \int_{\Gamma_0}G_1\cdot\overline{\Phi}\,\dd s
    =
    0
    \qquad
    \text{for all }\Phi\in\mathcal N^* .
    \label{eq:first-correction-fredholm-condition}
\end{equation}
Here $\Gamma_0=\{(s,0):\delta<|s|<R\}$ and $\mathcal N^*$ is the finite
dimensional kernel of the adjoint homogeneous flat Lam\'e problem with the
corresponding homogeneous adjoint auxiliary conditions.
If the compatibility conditions hold, the solution is determined up to the
kernel of the direct homogeneous flat problem.  This remaining homogeneous
part is fixed by the chosen radiation and normalization conditions.  In
particular, if the adjoint kernel is trivial after those conditions are
imposed, then the compatibility conditions are empty.
\end{lemma}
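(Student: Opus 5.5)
The plan is to realise the first-correction problem \eqref{eq:first-correction-inhomogeneous} on the bounded domain $D_{\delta,R,H}$ as a Fredholm operator of index zero, and then read off the compatibility conditions from the closed-range theorem. The proof reduces to three ingredients: (i) the principal part $\mathcal L_0$ is the flat Lam\'e operator plus the bounded zeroth-order term $\Omega_R I$; (ii) the boundary operator $\mathcal B_0$ is the traction operator, while the auxiliary conditions on the artificial sides are fixed homogeneous conditions of Lopatinskii type; (iii) Green's formula for the Lam\'e system identifies the adjoint problem. Smoothness of $W_0$ in a neighbourhood of $\overline D_{\delta,R,H}$ matters because $|s|\ge\delta>0$ there. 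Consequently $k(s)=\alpha(\alpha-1)|s|^{\alpha-2}$ is smooth and bounded on the closure, and by \eqref{eq:first-correction-F1-geom}, \eqref{eq:first-correction-G1-geom}, \eqref{eq:wavenumber-first-operator} and \eqref{eq:wavenumber-first-boundary}, the data $F_1\in L^2(D_{\delta,R,H})$ and $G_1\in H^{1/2}(\Gamma_0)$. These are exactly the data spaces for the $H^2$ (or $H^1$ weak) theory.

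\textbf{Step 1 (weak formulation).} I will write the problem variationally. For $\Psi$ in the test space $V\subset H^1(D_{\delta,R,H})^2$ encoding the auxiliary side conditions, the form is
\[
a(W,\Psi)=\int_{D}\sig(W):\overline{e(\Psi)}\,\dd s\,\dd n-\Omega_R\int_D W\cdot\overline\Psi\,\dd s\,\dd n .
\]
The right-hand side is $\ell(\Psi)=-\int_D F_1\cdot\overline\Psi-\int_{\Gamma_0}G_1\cdot\overline\Psi$; the sign convention is to be fixed so that it matches \eqref{eq:first-correction-inhomogeneous} through Green's formula. Korn's inequality on the Lipschitz domain $D_{\delta,R,H}$ (a union of two rectangles) gives a G\aa rding inequality:
\[
\operatorname{Re}a(W,W)+C\|W\|_{L^2}^2\ge c\|W\|_{H^1}^2 .
\]
The term $-\Omega_R\int W\cdot\overline W$ is a compact perturbation by Rellich's theorem. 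Hence the operator $A:V\to V^*$ induced by $a$ is Fredholm of index zero.

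\textbf{Step 2 (Fredholm alternative).} By the closed-range theorem, $AW=\ell$ is solvable iff $\ell$ annihilates $\ker A^*$. The adjoint form is $a^*(\Phi,W)=\overline{a(W,\Phi)}$, so $\ker A^*$ consists of the weak solutions of the adjoint homogeneous Lam\'e problem with the adjoint auxiliary conditions; this is $\mathcal N^*$. Because $\Omega_R$ is real and the Lam\'e form is symmetric, $\mathcal N^*$ essentially coincides with the direct kernel, but that identification is not needed. The condition $\ell(\Phi)=0$ is precisely \eqref{eq:first-correction-fredholm-condition}. Finite dimensionality of $\mathcal N^*$ follows from index zero. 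Uniqueness up to $\ker A$ is immediate, and once radiation and normalization conditions cut $\ker A$ down, the adjoint kernel is correspondingly reduced. If that reduced adjoint kernel is trivial, surjectivity follows and the conditions are empty.

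\textbf{Step 3 (regularity and strong form).} To interpret the weak solution as a solution of \eqref{eq:first-correction-inhomogeneous}, I will invoke elliptic regularity for the Lam\'e system with traction data. This is standard interior and boundary regularity away from the corners of the rectangles. Near the corners, the weak formulation suffices for the statement as posed.

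\textbf{Main obstacle.} The delicate point is Step 1's dependence on the unspecified auxiliary conditions on the artificial sides $|s|=\delta,R$ and $n=H$. I will take them to be any combination of Dirichlet and traction conditions defining a closed subspace $V$ on which Korn's inequality holds. Pure traction everywhere is acceptable because Korn's second inequality holds modulo rigid motions, and $\Omega_R\neq0$ makes that perturbation compact anyway. The adjoint conditions are then obtained from the boundary terms in Green's formula. With that choice made explicit, the rest is the standard Fredholm argument.
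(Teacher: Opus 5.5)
Your proof is correct, but it reaches the Fredholm alternative by a different route from the paper.

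The paper argues at the level of strong elliptic theory. It notes that $(\mathcal L_0,\mathcal B_0)$ is the constant-coefficient Lam\'e system with the traction operator, which is elliptic and satisfies the complementing condition on the flat boundary. It then asserts that, together with the side conditions, this gives a Fredholm map between Sobolev spaces. Necessity of the compatibility condition is derived directly from Green's identity tested against $\Phi\in\mathcal N^*$, and sufficiency is quoted from the abstract Fredholm alternative.

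You instead pass to the sesquilinear form $a(W,\Psi)$. You get a G\aa rding inequality from Korn's second inequality on the Lipschitz set $D_{\delta,R,H}$, treat the $\Omega_R$ term as a compact perturbation via Rellich, and obtain both directions at once from the closed-range theorem. Your sign for $\ell$ is already correct, because $\mathcal B_0W=-\sig(W)\bm n_{\rm out}$ on $n=0$.

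Your route has three advantages. It is valid on the two rectangles including their corners, where the paper's appeal to $H^2$-type elliptic theory would need a separate corner analysis. It yields index zero. For Dirichlet/traction side conditions the form is Hermitian, so $A=A^*$ and $\mathcal N^*$ coincides with the direct kernel.

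It also has costs. It gives only weak solutions near the corners. It requires variational side conditions, whereas the paper's formulation nominally allows any complementing conditions and strong solutions.

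One point in your Step 2 should be corrected. Imposing a normalization on the solution, such as fixing its projection onto $\ker A$, does not change $\operatorname{ran}A$. It therefore does not reduce the adjoint kernel or the compatibility conditions at all.

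Only a genuine change of the side conditions changes the adjoint problem. Radiation or impedance-type conditions are an example, and in your framework they are compact boundary perturbations that keep index zero. In that case index zero gives $\dim\ker A=\dim\mathcal N^*$. That equality is the correct justification of the final clause: a trivial direct kernel under such conditions forces the compatibility conditions to be empty.
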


\begin{proof}
For fixed $\delta>0$, the functions $k(s)$, $r_j(s,n)$ and $\gamma_j(s,n)$
are smooth and bounded on $\overline D_{\delta,R,H}$.
Hence the data $F_1$ and $G_1$ in
\eqref{eq:first-correction-data}, with the explicit components
\eqref{eq:first-correction-F1-geom}--\eqref{eq:first-correction-G1-geom} and
\eqref{eq:wavenumber-first-operator}--\eqref{eq:wavenumber-first-boundary}, have the same
local Sobolev regularity as the corresponding derivatives of $W_0$.
The pair $(\mathcal L_0,\mathcal B_0)$ is the constant-coefficient Lam\'e
system with the traction-free boundary operator; it is elliptic and satisfies
the complementing boundary condition on the flat boundary.
With the auxiliary side conditions it defines a Fredholm map between the
corresponding Sobolev spaces.
Green's identity for the Lam\'e operator gives, for any adjoint homogeneous
mode $\Phi\in\mathcal N^*$,
\[
    \int_{D_{\delta,R,H}}(\mathcal L_0 W_1)\cdot\overline{\Phi}\,\dd s\,\dd n
    +
    \int_{\Gamma_0}(\mathcal B_0 W_1)\cdot\overline{\Phi}\,\dd s
    =
    0,
\]
where the artificial-side terms vanish because of the paired homogeneous
boundary conditions.  Substitution of
\eqref{eq:first-correction-inhomogeneous} gives
\eqref{eq:first-correction-fredholm-condition}; hence the condition is
necessary.
Conversely, the Fredholm alternative for the elliptic boundary-value problem
states that this orthogonality to the adjoint kernel is sufficient for
solvability.  If the adjoint kernel is trivial, there is no compatibility
condition to check.  Non-uniqueness is exactly the addition of homogeneous
solutions of the flat problem, and is removed by fixing the homogeneous
Rayleigh-mode amplitudes or by imposing the chosen radiation/normalization
condition.
\end{proof}

Thus the special right-hand side in
\eqref{eq:first-correction-data} is not automatically admissible in a resonant
truncated problem.  The orthogonality condition
\eqref{eq:first-correction-fredholm-condition} is the equation that determines
the scalar first correction $m_1$ to the effective Rayleigh wavelength
parameter.

For the positive-frequency phasor of a single right-going flat Rayleigh mode
this conclusion can be checked directly.  The projection constants are
computed for the unit phasor.  Let $q=2\pi$ and write
\[
    W_0(s,n)=e^{iqs}\Phi(n),
\]
where
\[
    \Phi(n)=
    \begin{pmatrix}
        i\left(e^{-pqn}-\dfrac{2p\beta}{1+\beta^2}e^{-\beta qn}\right)\\[4pt]
        -p e^{-pqn}+\dfrac{2p}{1+\beta^2}e^{-\beta qn}
    \end{pmatrix},
    \qquad
    p=\sqrt{1-\frac{c_R^2}{c_P^2}},\quad
    \beta=\sqrt{1-\frac{c_R^2}{c_S^2}},
\]
and the Rayleigh equation is $(1+\beta^2)^2=4p\beta$.
The physical leading displacement represented by this phasor is not $W_0$
alone.  With a complex amplitude $A$ it is
\begin{equation}
    W_{0,\rm phys}(s,n,t)
    =
    \frac12\left(
        A e^{i(qs-\omega t)}\Phi(n)
        +
        \overline{A} e^{-i(qs-\omega t)}\overline{\Phi(n)}
    \right).
    \label{eq:real-rayleigh-mode}
\end{equation}
The conjugate component satisfies the conjugate correction problem, and the
constant amplitude cancels from the scalar equation for $m_1$.
Testing the geometry-only first-correction data against the adjoint mode
$e^{iqs}\Phi(n)$ gives the Fredholm density
\[
    C_\kappa\,k(s)+C_{\kappa'}\,k'(s),
\]
with
\[
\begin{aligned}
    C_\kappa
    &=
    \int_0^\infty F_\kappa(n)\cdot\overline{\Phi(n)}\,\dd n
    +
    G_\kappa\cdot\overline{\Phi(0)},\\
    C_{\kappa'}
    &=
    \int_0^\infty F_{\kappa'}(n)\cdot\overline{\Phi(n)}\,\dd n .
\end{aligned}
\]
Here $F_\kappa,G_\kappa$ are obtained by setting $k=1$, $k'=0$, and
$F_{\kappa'}$ by setting $k=0$, $k'=1$ in the data above.  The integrations are
only in $n$; the remaining integration over $s$ averages $k(s)$ and $k'(s)$.
For $q=1$, $\mu=1$, $\lambda=2$, direct integration gives
\[
    C_\kappa=5.920160973\ldots,\qquad
    C_{\kappa'}=-3.640899899\ldots\, i .
\]
The scalar wavelength correction is obtained from the real part of the phasor
Fredholm condition; the imaginary part belongs to the quadrature
phase/amplitude balance of the complex Rayleigh amplitude.
The wavelength degree of freedom contributes another projection.  Let
$D_m$ denote the projection of
\eqref{eq:wavenumber-first-operator}--\eqref{eq:wavenumber-first-boundary}
corresponding to the scalar variation $\nu=1$:
\[
    D_m
    =
    \int_0^\infty P_m(n)\cdot\overline{\Phi(n)}\,\dd n
    +
    Q_m\cdot\overline{\Phi(0)} .
\]
Here $P_m,Q_m$ are obtained from
\eqref{eq:wavenumber-first-operator}--\eqref{eq:wavenumber-first-boundary} by
setting $\nu=1$.  The Fredholm condition is not imposed pointwise in $s$.
On a truncated outer interval
\[
    I_{\delta,R}=\{s:\delta<|s|<R\}
\]
it gives the algebraic equation
\begin{equation}
    D_m m_1 |I_{\delta,R}|
    =
    \operatorname{Re}
    \int_{I_{\delta,R}}
    \left(C_\kappa k(s)+C_{\kappa'}k'(s)\right)\,\dd s .
    \label{eq:first-wavenumber-correction}
\end{equation}
For the same normalization $q=\mu=1$, $\lambda=2$,
\[
    D_m=-2.976857206\ldots ,
\]
so the scalar solvability equation is non-degenerate.  For the constants above
$\operatorname{Re}C_{\kappa'}=0$, so the $k'$ term does not contribute to the
real wavelength correction.  Since
\[
    \lambda_{\rm eff}(a)=\lambda_R\left(1-a m_1+O(a^2)\right),
\]
the number $-m_1$ is precisely the first relative wavelength correction for the
chosen outer solvability problem.

More generally, the full recursion obtained from the $m_a$-dependent operator
expansion \eqref{eq:operator-taylor-series} is
\begin{equation}
    \mathcal L_0W_j
    =
    -\sum_{p=1}^j \mathcal L_p W_{j-p},
    \qquad
    \mathcal B_0W_j
    =
    -\sum_{p=1}^j \mathcal B_p W_{j-p},
    \qquad j\ge1.
    \label{eq:full-outer-recursion}
\end{equation}
In these sums the coefficients $\mathcal L_p,\mathcal B_p$ already contain the
contributions coming from the geometry and from the wavelength ratio
$m_a=1+\sum_{q\ge1}a^q m_q$.
Since the new scalar $m_j$ first appears only in the terms
$\mathcal L_jW_0,\mathcal B_jW_0$, one may rewrite the $j$th step as
\begin{equation}
    \mathcal L_0W_j
    =
    H_j-\mathcal P_{m_j}W_0,
    \qquad
    \mathcal B_0W_j
    =
    K_j-\mathcal Q_{m_j}W_0 .
    \label{eq:a-nth-correction-problem}
\end{equation}
Here $H_j$ and $K_j$ are explicit expressions obtained from
\eqref{eq:full-outer-recursion}; they depend only on the geometric
coefficients, on $m_1,\ldots,m_{j-1}$, and on
$W_0,\ldots,W_{j-1}$.  Projection onto the adjoint Rayleigh mode gives
\begin{equation}
    D_m m_j |I_{\delta,R}|
    =
    \operatorname{Re} R_j(I_{\delta,R}),
    \label{eq:nth-wavenumber-correction}
\end{equation}
where $R_j(I_{\delta,R})$ is the known Fredholm projection of
$(H_j,K_j)$ over the same interval.  Once $m_j$ is fixed by
\eqref{eq:nth-wavenumber-correction}, the real wavelength-channel part of the
Fredholm compatibility condition for $W_j$ is satisfied.

The scalar wavelength correction changes only the regular phase of the
right-going Rayleigh factor, for instance through the first-order term
$iqm_1sW_0$.  It does not create an additional local cusp singularity.  After
the compatibility condition has been enforced, the local singular part of the
first displacement correction is therefore the geometric rotation term
\begin{equation}
    W_1^{\rm sing}
    =
    f'(s)
    \begin{pmatrix}
        v_0-n u_{0,s}\\
        -u_0-n v_{0,s}
    \end{pmatrix}.
    \label{eq:W1-singular-part}
\end{equation}
For $f(s)=|s|^\alpha$ one has $f'(s)=O(|s|^{\alpha-1})$ and
$f''(s)=O(|s|^{\alpha-2})$.  Consequently
\begin{equation}
    W_1=O(|s|^{\alpha-1}),\qquad
    \nabla W_1=O(|s|^{\alpha-2}).
    \label{eq:W1-growth}
\end{equation}
\begin{theorem}[formal solvability of the modulated outer expansion]
\label{thm:modulated-outer-expansion}
Assume that the adjoint Rayleigh eigenspace is one-dimensional and that
$D_m\ne0$.  On every truncated interval $\delta<|s|<R$ the formal hierarchy can
be solved recursively after the homogeneous Rayleigh amplitudes in the
corrections are fixed.  At each order $j$, the Fredholm compatibility condition
determines the scalar wavelength correction through
\eqref{eq:nth-wavenumber-correction}; with this choice of $m_j$, the
real part of the resonant obstruction to the inhomogeneous problem for $W_j$
is removed.  The quadrature part is handled by the complex amplitude/radiation
normalization and by adding the conjugate phasor in the real field.
\end{theorem}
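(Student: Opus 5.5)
We argue by induction on $j$, with Lemma~\ref{lem:first-correction-solvability} supplying the solvability criterion at each step. On the truncated strip $D_{\delta,R,H}$ the coefficients $k(s)$, $r_p$, $\gamma_p$, $\kappa_p$ are smooth and bounded, because $\delta>0$ keeps us away from the cusp. Hence every coefficient operator $\mathcal L_p,\mathcal B_p$ in \eqref{eq:operator-taylor-series} is a differential operator of order at most two with smooth bounded coefficients, and it is polynomial in $m_1,\dots,m_p$.

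\emph{Induction hypothesis at order $j$.} Suppose $W_0,\dots,W_{j-1}$ are smooth on a neighbourhood of $\overline D_{\delta,R,H}$, and $m_1,\dots,m_{j-1}$ have already been fixed. The case $j=1$ is \eqref{eq:a-leading-problem} with $m_0=1$.

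\emph{Isolating the new unknown.} The first step is to separate from \eqref{eq:full-outer-recursion} the single term in which $m_j$ occurs. In $D_a=m_a\partial_s$ and in \eqref{eq:full-operator-La}, the coefficient of $a^j$ is linear in $m_j$. At order $a^j$ this coefficient multiplies only the $a^0$ parts of the geometry, namely $r_0=1$ and $\gamma_0=\kappa_0=0$. Its contribution is therefore exactly $\mathcal P_{m_j}W_0$ and $\mathcal Q_{m_j}W_0$, the same operators as in \eqref{eq:wavenumber-first-operator}--\eqref{eq:wavenumber-first-boundary}. I would verify this by differentiating \eqref{eq:full-operator-La} with respect to $m_a$ at $a=0$. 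Every other term involves $m_1,\dots,m_{j-1}$ and $W_0,\dots,W_{j-1}$, which gives \eqref{eq:a-nth-correction-problem} with known smooth data $(H_j,K_j)$.

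\emph{Applying the Fredholm alternative.} Lemma~\ref{lem:first-correction-solvability} applies unchanged to the pair $(\mathcal L_0,\mathcal B_0)$ with arbitrary smooth data, because its proof uses only ellipticity, the complementing condition and Green's identity. So $W_j$ exists if and only if the data are orthogonal to $\mathcal N^*$. By assumption $\mathcal N^*$ is spanned by the single adjoint Rayleigh mode $e^{iqs}\Phi(n)$. The compatibility condition is then one complex scalar equation:
\[
R_j(I_{\delta,R})-m_j\,D_m|I_{\delta,R}|=0 .
\]
The factor $D_m|I_{\delta,R}|$ comes from projecting $\mathcal P_{m_j}W_0$ onto the mode: the $n$-integral gives $D_m$ and the $s$-integral of $|e^{iqs}|^2=1$ gives $|I_{\delta,R}|$. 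Since $D_m$ is real and nonzero and $m_j$ is required to be real, the real part of this equation is solvable uniquely and yields \eqref{eq:nth-wavenumber-correction}.

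\emph{The imaginary residual.} The remaining imaginary part, $\operatorname{Im}R_j$, cannot be absorbed by a real $m_j$. Following the statement, I would absorb it by a first-order correction to the complex Rayleigh amplitude or phase of $W_0$, that is, by a homogeneous kernel element at the previous order together with the radiation normalization. This is the step I expect to be the main obstacle. One must check that a complex amplitude correction, entering at order $j-1$ as a multiple of the direct kernel, produces at order $j$ a projection with nonzero imaginary coefficient. Equivalently, the complexified $2\times2$ real system for $(m_j,\ \text{amplitude correction})$ must be nondegenerate. I would check this through the same projection identity, noting that $iqm_1sW_0$-type secular terms carry the phase.

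\emph{Closing the induction.} Reality of the physical field is handled separately. The conjugate phasor $\overline W_j$ solves the conjugate problem, whose compatibility condition is the complex conjugate of the one above, so the same real $m_j$ works. Once the resonant part is removed, the lemma gives $W_j$ up to an element of the direct kernel, which is fixed by the normalization. Interior elliptic regularity makes $W_j$ smooth on a neighbourhood of $\overline D_{\delta,R,H}$, which closes the induction.
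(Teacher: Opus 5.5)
For the wavelength channel your argument is the paper's proof written out in more detail. The paper also applies the Fredholm alternative order by order, notes that $m_j$ enters only through $-\mathcal P_{m_j}W_0,-\mathcal Q_{m_j}W_0$, projects onto the adjoint Rayleigh mode, and uses $D_m\ne0$ to fix a real $m_j$. Your isolation of $m_j$, the factor $|I_{\delta,R}|$ and the reality of $D_m$ are all fine.

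\textbf{The gap.} The genuine gap is the quadrature step you flagged, and the mechanism you propose cannot close it.
\begin{itemize}
\item The hierarchy is linear. Replacing $W_{j-1}$ by $W_{j-1}+cW_0$ changes the order-$j$ data by $-c(\mathcal L_1W_0,\mathcal B_1W_0)$. It therefore shifts the order-$j$ compatibility functional by exactly $c\,(R_1-m_1D_m|I_{\delta,R}|)$, which is $c$ times the order-$1$ functional.
\item That complex number must already vanish for $W_1$ to exist. At order $1$ nothing is adjustable, because the amplitude of $W_0$ factors out of the whole order-$1$ condition.
\item More generally, with the direct kernel spanned by $W_0$, two admissible choices of homogeneous parts differ by multiplying the formal series by a scalar series $1+\sum_{l\ge1}c_la^l$. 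This adds to the order-$j$ residual only $\sum_l c_l\times(\text{residual at order } j-l)$, and all of these vanish once $W_1,\dots,W_{j-1}$ exist.
\end{itemize}
So the amplitude unknowns enter your $2\times2$ system with zero coefficient, and no amplitude or radiation normalization can absorb a nonzero $\operatorname{Im}R_j$. The paper's proof does not supply this step either; it only asserts that the complementary quadrature condition fixes the complex Rayleigh amplitude.

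\textbf{A repair for the symmetric profile.} For $f=|s|^\alpha$, what closes the argument is to show that $\operatorname{Im}R_j=0$ identically, so nothing needs absorbing.
\begin{itemize}
\item At $j=1$ this is visible in the paper's constants: $C_\kappa$ is real, $C_{\kappa'}$ is purely imaginary, and $k$ is even, so $\int_{I_{\delta,R}}k'\,\dd s=0$.
\item In general, use the antilinear reflection $S:(u,v)(s,n)\mapsto\bigl(-\overline{u(-s,n)},\,\overline{v(-s,n)}\bigr)$. It commutes with $\mathcal L_a$ and $\mathcal B_a$ (real coefficients, real $m_a$, even $\kappa_a$ and $h_a$), hence with every $\mathcal L_p,\mathcal B_p$.
\item $S$ fixes $W_0=e^{iqs}\Phi(n)$, which is also the adjoint mode.
\item On the symmetric strip, the compatibility pairing $\langle F,\Psi\rangle=\int F\cdot\overline{\Psi}$ (plus the boundary term) satisfies $\langle SF,S\Psi\rangle=\overline{\langle F,\Psi\rangle}$.
\end{itemize}
With reflection-symmetric auxiliary conditions you may replace each $W_l$ by $\tfrac12(W_l+SW_l)$. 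All data then stay $S$-invariant and every $R_j$ is real. The other way out is to allow a complex $m_j$, i.e.\ spatial attenuation, but that changes the statement.

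\textbf{A smaller point.} Elliptic regularity gives $W_j$ on $D_{\delta,R,H}$ only up to the smooth boundary parts, with possible corner singularities, not on a neighbourhood of its closure. You cannot enlarge the strip to fix this, because $m_j$ depends on $I_{\delta,R}$. Phrase the induction hypothesis with regularity on $D_{\delta,R,H}$ itself.
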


\begin{proof}
The assertion is the Fredholm alternative applied at each order.  At order
$j$, all terms in $H_j,K_j$ are already known from lower orders, while the
unknown $m_j$ enters linearly through
$-\mathcal P_{m_j}W_0,-\mathcal Q_{m_j}W_0$.  Projection onto the adjoint
Rayleigh mode gives \eqref{eq:nth-wavenumber-correction}.  Since $D_m\ne0$,
this algebraic equation determines the real scalar $m_j$ and removes the
obstruction in the wavelength channel.  The complementary quadrature condition
fixes the usual complex homogeneous Rayleigh amplitude or, in a scattering
formulation, the corresponding radiation coefficient.
\end{proof}

\begin{proposition}[outer validity scale]
\label{prop:outer-validity-scale}
For the modulated Rayleigh expansion in the wavelength-scale boundary layer
$n=O(1)$, the regular expansion in powers of $a$ is ordered in the region
\begin{equation}
    a|s|^{\alpha-2}\ll1 .
    \label{eq:outer-validity-condition}
\end{equation}
The first transition from the flat Rayleigh outer field occurs at the inner
curvature/wavelength scale
\begin{equation}
    |s|=O(\ell_a),\qquad
    \ell_a=a^{1/(2-\alpha)} .
    \label{eq:cusp-transition-scale-a}
\end{equation}
\end{proposition}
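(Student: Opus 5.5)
The plan is to compare successive terms of the outer series \eqref{eq:displacement-a-series} in the layer $n=O(1)$ and find where their ratio stops being small. Two ingredients are needed: the geometric expansion coefficients, and the growth of the corrections produced by the recursion \eqref{eq:full-outer-recursion}.

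\emph{Geometric coefficients.} For fixed $n=O(1)$ I would expand $\kappa_a$, $h_a$, $r_a$, $\gamma_a$ from \eqref{eq:curvature-with-a}. With $f'=O(|s|^{\alpha-1})$ and $f''=O(|s|^{\alpha-2})$ from \eqref{eq:cusp-derivatives-singular}, this gives
\[
\kappa_a=a f''\bigl(1+O(a^2 f'^2)\bigr),\qquad
r_a-1=O(a|s|^{\alpha-2}),\qquad
\gamma_a=O(a|s|^{\alpha-2}).
\]
The same bounds should hold for the $s$-derivatives after dividing by powers of $|s|$. So each coefficient $a^p\mathcal L_p$, $a^p\mathcal B_p$ contains terms bounded by sums of monomials $(a|s|^{\alpha-2})^{i}(a^2|s|^{2\alpha-2})^{l}|s|^{-r}$, where $r$ counts $s$-derivatives falling on the curvature.

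\emph{Growth of the corrections.} By induction on $j$, using Lemma~\ref{lem:first-correction-solvability} and Theorem~\ref{thm:modulated-outer-expansion} at each order, I would show that after the compatibility condition fixes $m_j$, the singular part of $W_j$ is bounded by $C_j\,(|s|^{\alpha-1})^{j_1}(|s|^{\alpha-2})^{j_2}$ with $j_1+j_2=j$. The $j=1$ case is \eqref{eq:W1-singular-part}--\eqref{eq:W1-growth}. The mechanism is that the flat operator $\mathcal L_0$ acts on smooth-in-$n$, Rayleigh-modulated data. Since $\partial_s$ acting on the slowly varying envelope costs only $|s|^{-1}\ll 1$ relative to the $O(1)$ wavenumber, particular solutions inherit the envelope growth of the forcing, and the bounded wavelength corrections $m_j$ add no singularity. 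The ratio of consecutive terms is therefore at most $\max\bigl(a|s|^{\alpha-1},\,a|s|^{\alpha-2}\bigr)$.

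\emph{Dominant condition and transition scale.} For $|s|\lesssim1$ we have $|s|^{\alpha-1}\le|s|^{\alpha-2}$, and $a^2|s|^{2\alpha-2}\le a|s|^{\alpha-2}$ whenever $a|s|^{\alpha}\le1$. Hence the second condition in \eqref{eq:cusp-small-a-validity} with $n=O(1)$ implies all the others, which gives the ordering condition \eqref{eq:outer-validity-condition}. Setting $a|s|^{\alpha-2}\sim1$ yields $|s|\sim a^{1/(2-\alpha)}=\ell_a$. At this scale $a\ell_a^{\alpha-1}=\ell_a\ll1$, so the slope condition is still satisfied. The first failure is therefore the curvature condition, equivalently the point where the radius of curvature $1/|\kappa_a|$ becomes comparable to the wavelength (which is $1$ in these units). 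This identifies \eqref{eq:cusp-transition-scale-a} as the first transition. I would also note that for $\alpha<1/2$ the transition happens where curvature is small in the original variables but large relative to $a$-scaled wavelength units, which the formula still covers.

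\emph{Main obstacle.} The hard step is the inductive growth bound for $W_j$. The Fredholm solvability argument is posed on truncated strips with fixed $\delta$, so it gives no uniformity as $\delta\to0$. To get uniform bounds I would first try a rescaling: on dyadic intervals $|s|\sim 2^{-k}$, write $s=2^{-k}\sigma$, use that the coefficients are explicit powers, and apply the flat Rayleigh solution operator with the compatibility condition imposed per interval. One then has to check that the $O(1)$ wavenumber keeps the constants uniform in $k$ as long as $2^{-k}\gg\ell_a$. If that proves too delicate, I would settle for the formal statement: term-by-term order counting of the explicit recursion, in the same formal spirit as Theorem~\ref{thm:modulated-outer-expansion}.
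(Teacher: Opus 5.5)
Your first and third steps are the paper's proof. The paper argues only at the level of the coefficients: $k(s)=f''(s)=O(|s|^{\alpha-2})$ enters both the geometric operator and the wavelength modulation, so with $n=O(1)$ the small parameter of the hierarchy is $a|s|^{\alpha-2}$, and setting it equal to one gives $\ell_a$. It attempts no bound on the $W_j$, so your formal fallback coincides with it. Your check that $a\ell_a^{\alpha-1}=\ell_a\ll1$ and $a^2|s|^{2\alpha-2}\le a|s|^{\alpha-2}$ is correct and goes beyond the paper. It is what justifies calling $\ell_a$ the \emph{first} transition, because the slope condition in \eqref{eq:cusp-small-a-validity} fails only at the much smaller scale $a^{1/(1-\alpha)}$. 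The paper leaves this implicit.

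The inductive growth bound, which you single out as the main step, rests on a false premise. You assume that $\partial_s$ on the envelope costs $|s|^{-1}\ll1$, but this holds only for $|s|\gg1$, where every parameter is already $\le a$. The non-trivial part of the outer region is $\ell_a\ll|s|\lesssim1$, within one Rayleigh wavelength of the cusp. There $|s|^{-1}\gg1$, so the envelope $|s|^{\alpha-2}$ varies faster than the phase $e^{iqs}$, not slower, and adiabatic inheritance is unavailable.

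Taken literally, the envelope bookkeeping would not close. The terms with $k'(s)=O(|s|^{\alpha-3})$ in \eqref{eq:L1-geom-explicit}, i.e.\ your $|s|^{-r}$ factors, would produce non-resonant responses of relative size $a|s|^{\alpha-3}$. That would give a spurious transition at $a^{1/(3-\alpha)}\gg\ell_a$. In this zone $\mathcal L_0$ acting on the corrections is dominated by $s$-derivatives of size $|s|^{-1}$, not by the wavenumber, so inertia is subdominant. The growth \eqref{eq:W1-growth} comes from the explicit frame-rotation term \eqref{eq:W1-singular-part}. There each $s$-derivative of the correction costs $|s|^{-1}$, and that is exactly what absorbs the $|s|^{-r}$ factors.

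The same issue affects your dyadic rescaling. Setting $s=2^{-k}\sigma$ drives the rescaled wavenumber to $2^{-k}q\to0$, so any uniformity must come from a degenerating quasi-static limit problem, not from the $O(1)$ wavenumber.

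Your $\alpha<1/2$ remark is also muddled. The graph curvature is non-monotone: it is $\approx a|s|^{\alpha-2}$ while the graph is nearly flat, and it decays only in the steep zone $|s|\ll a^{1/(1-\alpha)}$. That zone lies inside the inner region, so there is nothing to reconcile.
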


\begin{proof}
For $f(s)=|s|^\alpha$, the curvature coefficient entering both the geometric
operator and the wavelength modulation is
$k(s)=O(|s|^{\alpha-2})$.  Hence the first small parameter of the regular
Rayleigh-scale hierarchy is $a|s|^{\alpha-2}$, not $a$ alone.  Equating this
combination to one gives \eqref{eq:cusp-transition-scale-a}.
\end{proof}

\section{Inner scaling and anisotropic blow-up of the cusp}

The scale \eqref{eq:cusp-transition-scale-a} is a longitudinal scale: it marks
where the Rayleigh-scale expansion first loses uniformity along the surface.
It does not mean that the transverse coordinate in a vanishing horn has the
same scale.  This point is essential for the ridge geometry.
We therefore set
\begin{equation}
    s=\ell_a r,\qquad
    \ell_a=a^{1/(2-\alpha)} ,
    \label{eq:inner-natural-variables}
\end{equation}
and keep the normal scale tied to the local cross-section.
In dimensional variables the longitudinal inner length is
\begin{equation}
    L_a=\lambda_R\ell_a .
    \label{eq:dimensional-inner-length-scale}
\end{equation}
For a ridge, the physical half-width at $s=\ell_a r$ is
$B(\ell_a r)^m$.  Hence the correct transverse stretching is
\begin{equation}
    n=\ell_a^m y,
    \qquad
    \mathcal C_{\rm in}^{+}
    =
    \{r>0,\ |y|<Br^m\}.
    \label{eq:inner-cusp-domain-xy}
\end{equation}
Equivalently, the section can be blown up directly by
\begin{equation}
    s=\ell_a r,\qquad
    n=B\ell_a^m r^m\eta,
    \qquad -1<\eta<1 .
    \label{eq:cusp-blowup-map}
\end{equation}
Thus the cusp point becomes the edge $r=0$, $-1\le\eta\le1$.

The displacement scale is still inherited from the outer expansion.  Let
$U_{\rm R}$ denote the characteristic Rayleigh displacement amplitude.  After
the rigid translation and rotation at the cusp have been removed,
\[
    W_0(\ell_a r,B\ell_a^m r^m\eta)
    =
    W_0(0,0)+O(U_{\rm R}\ell_a)
    \qquad (r=O(1),\ |\eta|\le1).
\]
The singular geometric correction satisfies
$W_1^{\rm sing}=O(|s|^{\alpha-1})$, and therefore
\[
    aW_1^{\rm sing}(\ell_a r,B\ell_a^m r^m\eta)
    =
    O\left(U_{\rm R}a\ell_a^{\alpha-1}\right)
    =
    O(U_{\rm R}\ell_a).
\]
The non-rigid inner displacement scale is consequently
\begin{equation}
    U_a=U_{\rm R}\ell_a .
    \label{eq:inner-displacement-scale}
\end{equation}
We write
\begin{equation}
    W^a(s,n)=W_{\rm rig}^a(s,n)+U_a\mathcal W^a(r,\eta),
    \qquad
    \mathcal W^a=(U^a,V^a)^T .
    \label{eq:inner-displacement-normalization}
\end{equation}
Here $W_{\rm rig}^a$ carries no strain.  A generic cross-sectionally varying
inner field would have transverse strain scale
$U_a/(\lambda_R\ell_a^m)$, but the transverse Neumann problem below suppresses
that variation at leading order.  The surviving zero-mode strain scale is
\begin{equation}
    e_{\rm in}=O\left(\frac{U_a}{\lambda_R\ell_a}\right),
    \qquad
    \sigma_{\rm in}=O\left(\mu\frac{U_a}{\lambda_R\ell_a}\right),
    \label{eq:inner-strain-stress-scale}
\end{equation}
that is, $e_{\rm in}=O(U_{\rm R}/\lambda_R)$ and
$\sigma_{\rm in}=O(\mu U_{\rm R}/\lambda_R)$.

For the direct blow-up \eqref{eq:cusp-blowup-map}, differentiation at fixed
physical variables gives
\begin{equation}
    \partial_s
    =
    \ell_a^{-1}
    \left(
        \partial_r-\frac{m\eta}{r}\partial_\eta
    \right),
    \qquad
    \partial_n
    =
    \frac{1}{B\ell_a^m}r^{-m}\partial_\eta .
    \label{eq:partial-x-y}
\end{equation}
For a scalar function $w$,
\begin{equation}
\begin{aligned}
    \Delta w
    &=
    \ell_a^{-2}
    \left[
        w_{rr}
        -
        \frac{2m\eta}{r}w_{r\eta}
        +
        \frac{m^2\eta^2}{r^2}w_{\eta\eta}
        +
        \frac{m(m+1)\eta}{r^2}w_\eta
    \right]
    +
    \frac{\ell_a^{-2m}}{B^2}r^{-2m}w_{\eta\eta}.
\end{aligned}
    \label{eq:blownup-laplacian}
\end{equation}
Unlike polar blow-up of a cone, the cusp operator has no single homogeneous
factor.  Since $m>1$, the transverse term
$\ell_a^{-2m}r^{-2m}\partial_{\eta\eta}$ is stronger than the longitudinal
terms on the inner longitudinal scale.

The strain components in $(r,\eta)$ are
\begin{equation}
\begin{aligned}
    e_{ss}
    &=
    \ell_a^{-1}
    \left(
        U_r-\frac{m\eta}{r}U_\eta
    \right),
    \\
    e_{nn}
    &=
    \frac{\ell_a^{-m}}{B}r^{-m}V_\eta,
    \\
    2e_{sn}
    &=
    \frac{\ell_a^{-m}}{B}r^{-m}U_\eta
    +
    \ell_a^{-1}
    \left(
        V_r-\frac{m\eta}{r}V_\eta
    \right).
\end{aligned}
    \label{eq:blownup-strain-explicit}
\end{equation}
After multiplying the Lam\'e equations by $\ell_a^{2m}$, the leading interior
terms are
\begin{equation}
    \frac{\mu}{B^2}r^{-2m}U_{\eta\eta}+\cdots=0,
    \qquad
    \frac{\lambda+2\mu}{B^2}r^{-2m}V_{\eta\eta}+\cdots=0.
    \label{eq:leading-transverse}
\end{equation}
The omitted longitudinal, mixed and inertial terms are lower order by powers
of $\ell_a^{m-1}$, $\ell_a^{2m-2}$ and $\ell_a^{2m}$.

On the sides $\eta=\delta$, $\delta=\pm1$, the traction-free conditions are
\begin{equation}
\begin{aligned}
    &-Bm(\ell_a r)^{m-1}\sigma_{ss}
    +\delta\sigma_{sn}
    =0,
    \\
    &-Bm(\ell_a r)^{m-1}\sigma_{sn}
    +\delta\sigma_{nn}
    =0 .
\end{aligned}
    \label{eq:inner-xy-free-boundary}
\end{equation}
Their leading parts are precisely Neumann conditions in the blown-up
cross-section.  Therefore the leading transverse problem has only
$\eta$-independent finite modes:
\begin{equation}
    U_0=U_0(r),\qquad V_0=V_0(r).
    \label{eq:leading-eta-independent}
\end{equation}

Equivalently, the transverse eigenproblem is
\begin{equation}
    -\phi''(\eta)=\nu\phi(\eta),
    \qquad
    \phi'(-1)=\phi'(1)=0 .
    \label{eq:eta-neumann-pencil}
\end{equation}
The zero mode is $\nu_0=0$, while for $k\ge1$,
\begin{equation}
    \nu_k=\frac{k^2\pi^2}{4},
    \qquad
    \phi_k(\eta)=
    \cos\frac{k\pi(\eta+1)}2 .
    \label{eq:eta-nonzero-modes}
\end{equation}
The non-zero modes acquire the large transverse penalty $r^{-2m}\nu_k$.
The zero-mode amplitudes satisfy, to leading order,
\begin{equation}
    \frac{\dd}{\dd r}\left(r^m U_0'\right)
    +
    \frac{\eps_a^2\rho c_R^2}{\lambda+2\mu}r^mU_0
    =
    0,
    \label{eq:U0-radial-ode}
\end{equation}
\begin{equation}
    \frac{\dd}{\dd r}\left(r^m V_0'\right)
    +
    \frac{\eps_a^2\rho c_R^2}{\mu}r^mV_0
    =
    0.
    \label{eq:V0-radial-ode}
\end{equation}
For $\eps_a=0$ the independent branches are constants and $r^{1-m}$.

\section{Cuspidal ridge as a finite-strength elastic horn limit}

The static inner problem in the ridge is
\begin{equation}
    \mu\Delta\bm U+(\lambda+\mu)\nabla(\divg\bm U)=0
    \qquad \text{in }\mathcal C_+,
    \label{eq:static-ridge-full}
\end{equation}
with traction-free lateral sides, where the traction is computed from
$\sig(\bm U)$.
For the section $I_s=\{(s,n):|n|<b(s)\}$ define the force resultants
\begin{equation}
    N(s)=\int_{-b(s)}^{b(s)}\sigma_{ss}(s,n)\,\dd n,
    \qquad
    Q(s)=\int_{-b(s)}^{b(s)}\sigma_{sn}(s,n)\,\dd n,
    \label{eq:force-resultants}
\end{equation}
and the bending moment
\begin{equation}
    M(s)=\int_{-b(s)}^{b(s)} n\sigma_{ss}(s,n)\,\dd n .
    \label{eq:bending-moment}
\end{equation}
The area and second moment of the section are
\begin{equation}
    A(s)=2B s^m,\qquad
    I(s)=\frac23 B^3s^{3m}.
    \label{eq:area-moment}
\end{equation}

\begin{lemma}[sectional resultants]
\label{lem:ridge-resultants}
For a traction-free static ridge,
\begin{equation}
    N(s)=N_0,\qquad Q(s)=Q_0,
    \label{eq:NQ-constant}
\end{equation}
and
\begin{equation}
    sQ(s)-M(s)=H_0 .
    \label{eq:moment-balance}
\end{equation}
\end{lemma}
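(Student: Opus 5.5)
Integrate the static equilibrium equations $\partial_s\sigma_{ss}+\partial_n\sigma_{sn}=0$ and $\partial_s\sigma_{sn}+\partial_n\sigma_{nn}=0$ over the section $I_s$. Here the ridge is treated in the Cartesian frame of the local axis, with $s$ the axial coordinate $z$ and $n$ the transverse coordinate $x$, so there are no curvature terms. Use the traction-free conditions on the lateral sides $n=\pm b(s)$ to eliminate the boundary contributions. This is the standard beam-theory argument: the section between $s_1$ and $s_2$ is a free body whose only external loads are the section tractions at the two ends.

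\emph{Force resultants.} By Leibniz' rule,
\[
    N'(s)=\int_{-b}^{b}\partial_s\sigma_{ss}\,\dd n
    +b'(s)\bigl(\sigma_{ss}(s,b)+\sigma_{ss}(s,-b)\bigr).
\]
Replace the integrand by $-\partial_n\sigma_{sn}$, which gives $-\sigma_{sn}(s,b)+\sigma_{sn}(s,-b)$. On the side $n=\delta b(s)$ the outward normal is proportional to $(-\delta b'(s),\delta)$ in $(s,n)$ components. The first traction component therefore reads $-b'\sigma_{ss}+\delta\sigma_{sn}=0$, i.e.\ $\sigma_{sn}(s,\delta b)=\delta b'\sigma_{ss}(s,\delta b)$; this is the Cartesian form of \eqref{eq:inner-xy-free-boundary}. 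Substituting this shows the boundary terms cancel exactly, so $N'=0$. The identical computation with the second equilibrium equation and the second traction condition, $\sigma_{nn}(s,\delta b)=\delta b'\sigma_{sn}(s,\delta b)$, gives $Q'=0$. This proves \eqref{eq:NQ-constant}.

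\emph{Moment balance.} Differentiate $M$ in the same way:
\[
    M'=\int n\,\partial_s\sigma_{ss}\,\dd n
    +b'\,b\bigl(\sigma_{ss}(s,b)-\sigma_{ss}(s,-b)\bigr).
\]
Write $n\,\partial_s\sigma_{ss}=-n\,\partial_n\sigma_{sn}$ and integrate by parts in $n$. This gives $-[n\sigma_{sn}]_{-b}^{b}+\int\sigma_{sn}\,\dd n$. The bracket equals $b\bigl(\sigma_{sn}(s,b)+\sigma_{sn}(s,-b)\bigr)=b\,b'\bigl(\sigma_{ss}(s,b)-\sigma_{ss}(s,-b)\bigr)$ by the traction condition, so it cancels the Leibniz term. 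Hence $M'=Q$, and therefore $(sQ-M)'=Q+sQ'-M'=0$, which is \eqref{eq:moment-balance}. (Alternatively, balance the moment about the origin of the slab $s_1<s<s_2$, the $sQ$ term coming from the lever arm of the shear resultant.)

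\emph{Main obstacle.} The only delicate point is regularity: the differentiation under the integral and the boundary traces require the stresses to be, say, $C^1$ up to the lateral sides for each fixed $s>0$. This holds by elliptic regularity for the Lam\'e system with traction-free conditions on the smooth boundary portions away from the tip. If weaker regularity must be assumed, I would instead test the weak equilibrium equation against the rigid motions $(1,0)$, $(0,1)$ and $(-n,s)$, cut off to the slab $s_1<s<s_2$. That yields the same conservation laws in integrated form, valid for a.e.\ $s$ and then for all $s$ by continuity. Sign conventions for $M$, that is, which rotation field gives $sQ-M$ rather than $sQ+M$, should be checked against \eqref{eq:bending-moment}.
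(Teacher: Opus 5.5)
Your proof is correct and is essentially the paper's argument. The paper applies the divergence theorem, for both force and moment about the cusp point, to the slab between two sections, which is exactly your parenthetical free-body remark; your Leibniz-rule computation is the differential form of it, and $Q'=0$, $M'=Q$ settle the sign in favour of $sQ-M$. One slip to fix: on $n=\delta b(s)$ the outward normal is proportional to $(-b'(s),\delta)$, not $(-\delta b'(s),\delta)$, and the latter is not even normal to the lower side. The traction conditions you actually use, $\sigma_{sn}=\delta b'\sigma_{ss}$ and $\sigma_{nn}=\delta b'\sigma_{sn}$, are the correct ones, so the cancellations stand.
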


\begin{proof}
Apply the divergence theorem to the part of the horn between two sections.
The lateral boundary contributes no force because it is traction-free.
The same argument applied to the moment of the traction about the cusp point gives the conserved flux $sQ-M$.
\end{proof}

\begin{proposition}[finite-strength cut-off for tip resultants]
\label{prop:finite-strength-cutoff}
Let the ideal horn be replaced by a truncated horn
\[
    \mathcal C_{+,\delta}=\{s>\delta,\ |n|<Bs^m\},
    \qquad 0<\delta\ll1,
\]
and suppose that the admissible stress level at the truncated tip is bounded
by a material strength $\sigma_\ast$.
Then the sectional resultants that can be transmitted through the terminal
section $s=\delta$ without exceeding this stress level must satisfy
\begin{equation}
    |N_0|\le 2B\sigma_\ast\delta^m,
    \qquad
    |Q_0|\le 2B\sigma_\ast\delta^m,
    \label{eq:force-strength-bound}
\end{equation}
and
\begin{equation}
    |\delta Q_0-H_0|
    \le
    \frac23 B^2\sigma_\ast\delta^{2m}.
    \label{eq:moment-strength-bound}
\end{equation}
Consequently, a horn carrying a prescribed non-zero tip force or tip moment
has a positive minimal truncation length.  In particular,
\begin{equation}
    \delta
    \ge
    \left(\frac{|N_0|}{2B\sigma_\ast}\right)^{1/m},
    \qquad
    \delta
    \ge
    \left(\frac{|Q_0|}{2B\sigma_\ast}\right)^{1/m},
    \label{eq:force-min-cutoff}
\end{equation}
and, for a pure bending resultant $Q_0=0$,
\begin{equation}
    \delta
    \ge
    \left(\frac{3|H_0|}{2B^2\sigma_\ast}\right)^{1/(2m)} .
    \label{eq:moment-min-cutoff}
\end{equation}
\end{proposition}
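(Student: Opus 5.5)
The plan is to combine the conservation laws of Lemma~\ref{lem:ridge-resultants} with an elementary bound on resultants over the single terminal section $s=\delta$ of $\mathcal C_{+,\delta}$. First, the divergence-theorem argument of Lemma~\ref{lem:ridge-resultants} applies unchanged on $\mathcal C_{+,\delta}$, since the lateral sides are still traction-free. Hence $N(s)=N_0$, $Q(s)=Q_0$ and $sQ(s)-M(s)=H_0$ for all $s>\delta$. Passing to $s=\delta$ gives
\[
    N(\delta)=N_0,\qquad Q(\delta)=Q_0,\qquad M(\delta)=\delta Q_0-H_0 .
\]
So every prescribed tip resultant must be carried by the stresses on the one section $|n|<B\delta^m$, of area $A(\delta)=2B\delta^m$ and second moment $I(\delta)=\frac23B^3\delta^{3m}$ from \eqref{eq:area-moment}.

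Second, I bound the forces. If $|\sigma_{ss}|,|\sigma_{sn}|\le\sigma_\ast$ on the terminal section, integrating over $|n|<b(\delta)$ gives $|N_0|\le 2b(\delta)\sigma_\ast=2B\sigma_\ast\delta^m$, and the same bound for $Q_0$. This is \eqref{eq:force-strength-bound}. Equivalently, the mean section stress $N_0/A(\delta)$ cannot exceed $\sigma_\ast$.

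Third, I bound the moment; this is where the constant must be handled with care. A crude pointwise bound gives
\[
    |M(\delta)|\le\sigma_\ast\int_{-b}^{b}|n|\,\dd n=\sigma_\ast B^2\delta^{2m},
\]
which has constant $1$, not $\frac23$. The stated constant $\frac23$ is the elastic section-modulus bound. In the slender horn the leading zero-mode field of \eqref{eq:leading-eta-independent} gives a bending stress linear in $n$, namely $\sigma_{ss}=Mn/I$. The admissibility condition $\max_{|n|\le b}|\sigma_{ss}|=|M|\,b/I\le\sigma_\ast$ then yields
\[
    |\delta Q_0-H_0|\le\sigma_\ast\frac{I(\delta)}{b(\delta)}=\frac23B^2\sigma_\ast\delta^{2m},
\]
which is \eqref{eq:moment-strength-bound}. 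I will therefore state explicitly that "admissible stress level" refers to the peak of the linear-elastic (beam) stress distribution on the terminal section. The main point to justify is this identification. It rests on the leading transverse Neumann problem, which forces $\eta$-independent displacement modes and hence a linear bending-stress profile to leading order in $\chi(\delta)=B\delta^{m-1}\to0$. If only a pointwise bound is assumed, the same argument goes through with $\frac23$ replaced by $1$.

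Finally, I invert the bounds. From $|N_0|\le2B\sigma_\ast\delta^m$ one gets $\delta\ge(|N_0|/(2B\sigma_\ast))^{1/m}$, and likewise for $Q_0$; this is \eqref{eq:force-min-cutoff}. For $Q_0=0$, the bound $|H_0|\le\frac23B^2\sigma_\ast\delta^{2m}$ gives \eqref{eq:moment-min-cutoff}. In every case, a non-zero resultant forces a positive lower bound on $\delta$, which is the claimed minimal truncation length.
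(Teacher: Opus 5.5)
Your proposal is correct and follows the paper's proof: the conserved resultants of Lemma~\ref{lem:ridge-resultants} are evaluated at $s=\delta$, the forces $N_0,Q_0$ are bounded through the mean stress on $A(\delta)=2B\delta^m$, the moment is bounded through the extreme elastic bending stress $|M(\delta)|\,b(\delta)/I(\delta)\le\sigma_\ast$ with $M(\delta)=\delta Q_0-H_0$, and the inequalities are then inverted for $\delta$. Your caveat that the constant $\frac23$ presupposes the linear beam profile (a purely pointwise bound gives only the constant $1$) makes explicit what the paper assumes silently; note, however, that the part of $\sigma_{ss}$ linear in $n$ comes from the Bernoulli--Euler rotation of sections in the first transverse correction, not from the $\eta$-independent zero mode itself, which carries only a section-uniform stress.
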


\begin{proof}
At the terminal section $s=\delta$, the section area and moment of inertia are
$A(\delta)=2B\delta^m$ and $I(\delta)=2B^3\delta^{3m}/3$.
An axial or shear resultant of magnitude $N_0$ or $Q_0$ requires mean stresses
of order $N_0/A(\delta)$ and $Q_0/A(\delta)$; the bound by $\sigma_\ast$ gives
\eqref{eq:force-strength-bound}.  The bending moment at this section is
$M(\delta)=\delta Q_0-H_0$.  The extreme bending stress is
\[
    \frac{|M(\delta)|B\delta^m}{I(\delta)}
    =
    \frac{3|\delta Q_0-H_0|}{2B^2\delta^{2m}},
\]
and bounding it by $\sigma_\ast$ gives \eqref{eq:moment-strength-bound}.
Solving these inequalities for $\delta$ gives the stated lower bounds.
\end{proof}

Thus the ideal cuspidal ridge is not a model of a horn that can carry an
arbitrary finite force through a point.  It is the zero-tip-load limit of
truncated horns whose stresses remain below a prescribed non-destructive
level.  If $N_0,Q_0,H_0$ are kept non-zero while $\delta\to0$, then
\eqref{eq:force-strength-bound}--\eqref{eq:moment-strength-bound} fail and the
tip must fail, yield, fracture, or be replaced by a finite truncation.
Equivalently, the limiting ideal cusp with bounded stresses has
\[
    N_0\to0,\qquad Q_0\to0,\qquad H_0\to0 .
\]

The same conclusion is consistent with the finite-energy formulation of the
ideal cusp.  Cauchy's inequality gives
\begin{equation}
    N(s)^2\le A(s)\int_{I_s}\sigma_{ss}^2\,\dd n,
    \qquad
    Q(s)^2\le A(s)\int_{I_s}\sigma_{sn}^2\,\dd n,
    \label{eq:NQ-cauchy}
\end{equation}
and
\begin{equation}
    M(s)^2\le I(s)\int_{I_s}\sigma_{ss}^2\,\dd n .
    \label{eq:M-cauchy}
\end{equation}
The stress energy is bounded below by a constant multiple of
\begin{equation}
    \int_0^\ell
    \left[
        \frac{N_0^2}{A(s)}
        +
        \frac{Q_0^2}{A(s)}
        +
        \frac{(sQ_0-H_0)^2}{I(s)}
    \right]\dd s .
    \label{eq:resultant-energy-lower-bound}
\end{equation}
Since $A(s)\asymp s^m$ and $I(s)\asymp s^{3m}$ with $m>1$, a finite-energy
ideal cusp can only be the zero-resultant limit.

If non-zero resultants are formally continued toward the ideal tip, the corresponding Saint-Venant stresses have the orders
\begin{equation}
    \sigma_{ss}^{(N)}\sim \frac{N_0}{2B}s^{-m},
    \qquad
    \sigma_{sn}^{(Q)}\sim \frac{Q_0}{2B}s^{-m},
    \label{eq:force-singular-stresses}
\end{equation}
and
\begin{equation}
    \max_{|n|\le b(s)}|\sigma_{ss}^{(M)}|
    \asymp C|H_0|s^{-2m}.
    \label{eq:bending-singular-stress}
\end{equation}
These modes are therefore better interpreted as tip-load or finite-truncation
channels, not as admissible branches of a non-destructive ideal free cusp.

\begin{proposition}[high-energy tip-load branch]
\label{prop:high-energy-tip-load-branch}
Keep the force and moment resultants $N_0,Q_0,H_0$ non-zero and truncate the
horn at $s=\delta$.  Up to material constants, the force and shear channels
have the scales
\begin{equation}
    \sigma^{(N,Q)}=O(s^{-m}),
    \qquad
    e^{(N,Q)}=O(s^{-m}),
    \qquad
    U^{(N,Q)}=O(s^{1-m}),
    \label{eq:force-branch-scales}
\end{equation}
and their elastic energy in $\{\delta<s<\ell\}$ satisfies
\begin{equation}
    E_{N,Q}(\delta)
    \asymp
    (N_0^2+Q_0^2)
    \int_\delta^\ell s^{-m}\,\dd s
    \asymp
    C_{N,Q}\delta^{1-m}.
    \label{eq:force-branch-energy}
\end{equation}
For a pure bending channel, $Q_0=0$, the corresponding scales are
\begin{equation}
\begin{aligned}
    \sigma^{(H)}&=O(s^{-2m}),
    &
    e^{(H)}&=O(s^{-2m}),
    \\
    \theta^{(H)}&=O(s^{1-3m}),
    &
    U^{(H)}&=O(s^{2-3m}),
\end{aligned}
    \label{eq:moment-branch-scales}
\end{equation}
and
\begin{equation}
    E_H(\delta)
    \asymp
    H_0^2
    \int_\delta^\ell s^{-3m}\,\dd s
    \asymp
    C_H\delta^{1-3m}.
    \label{eq:moment-branch-energy}
\end{equation}
Thus these branches are high-energy as $\delta\to0$.
\end{proposition}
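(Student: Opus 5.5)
The plan is to derive all the stated scales from the conservation laws of Lemma~\ref{lem:ridge-resultants} combined with a Saint-Venant (beam-type) description of the slender horn. This description is justified by the anisotropic blow-up of the previous section: the transverse Neumann problem \eqref{eq:eta-neumann-pencil} penalises every non-zero cross-sectional mode by $r^{-2m}\nu_k$. On each section $I_s$ the stress is therefore, to leading order, the one determined by the resultants. Concretely, I will write $\sigma_{ss}=N(s)/A(s)+M(s)n/I(s)$ plus lower order, and take $\sigma_{sn}$ to be a profile of mean $Q(s)/A(s)$. Here $N=N_0$, $Q=Q_0$ and $M=sQ_0-H_0$ by \eqref{eq:NQ-constant}--\eqref{eq:moment-balance}. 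With $A\asymp s^m$ and $I\asymp s^{3m}$ from \eqref{eq:area-moment}, the force and shear channels give $\sigma=O(s^{-m})$. For $Q_0=0$, the extreme bending stress is $|H_0|b(s)/I(s)\asymp s^{-2m}$. The strain scales then follow from Hooke's law \eqref{eq:stress}.

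Next I will recover the displacement scales by integrating along the axis. For the axial channel, $\partial_sU=e_{ss}=O(s^{-m})$, so $U=O(s^{1-m})$ since $m>1$. This matches the non-constant branch $r^{1-m}$ of the zero-mode ODEs \eqref{eq:U0-radial-ode}--\eqref{eq:V0-radial-ode} at $\eps_a=0$, and the shear channel is handled the same way. For bending, the curvature of the axis is $\theta'=M/(EI)\asymp s^{-3m}$, with $E$ the plane-strain modulus. This gives $\theta=O(s^{1-3m})$, and integrating once more gives the deflection $O(s^{2-3m})$. In each integration the constants of integration are rigid motions, and these are removed as in \eqref{eq:inner-displacement-normalization}.

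For the energy I will integrate the section energy density. For each section, $\int_{I_s}\sigma:e\,\dd n\asymp N_0^2/A+Q_0^2/A+M^2/I$. The lower bound comes from \eqref{eq:NQ-cauchy}--\eqref{eq:M-cauchy}, giving \eqref{eq:resultant-energy-lower-bound}. The matching upper bound comes from the explicit Saint-Venant field, which attains Cauchy equality up to constants. Integrating over $\delta<s<\ell$ gives $\int_\delta^\ell s^{-m}\dd s\asymp\delta^{1-m}/(m-1)$ and $\int_\delta^\ell s^{-3m}\dd s\asymp\delta^{1-3m}/(3m-1)$, both of which diverge as $\delta\to0$ because $m>1$.

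The main obstacle is the upper bound, that is, showing the true elastic solution is asymptotically the Saint-Venant field, so that its energy is not much larger than the resultant lower bound. I would handle this formally by the inner expansion. The correction terms are smaller by factors $(\ell_a r)^{m-1}$ coming from the slope in \eqref{eq:inner-xy-free-boundary}, and the self-equilibrated transverse modes decay through the spectral gap $\nu_1=\pi^2/4$. Since the statement reads ``up to material constants'', a formal justification at this level suffices.
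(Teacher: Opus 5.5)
Your route matches the paper's: conserved resultants, Saint-Venant sectional stresses, integrating strain and curvature along the axis, and summing section energies. The axial channel $N_0$ and the pure-bending channel $Q_0=0$ come out correctly, and your Cauchy lower bound sharpens the paper's sketch. The gap is in the shear channel, and your own formulas expose it.

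You take $M=sQ_0-H_0$, which is exact by \eqref{eq:moment-balance}, together with $\sigma_{ss}=N/A+Mn/I$. You then assert that the shear channel gives $\sigma=O(s^{-m})$ and that the section energies integrate to $(N_0^2+Q_0^2)\int_\delta^\ell s^{-m}\,\dd s$. That silently drops $Q_0^2s^2/I-2Q_0H_0s/I$ from $M^2/I$. A shear resultant in a slender horn always carries the bending moment $sQ_0$:
\begin{itemize}
\item For $H_0=0$ the extreme bending stress is $|Q_0|s\,b(s)/I(s)\asymp|Q_0|s^{1-2m}$. This exceeds $Q_0/A(s)$ by the slenderness ratio $s/b(s)=s^{1-m}/B\to\infty$.
\item The rotation is then $\theta\asymp s^{2-3m}$, and the deflection is $O(s^{3-3m})$, not $O(s^{1-m})$.
\item For the same reason, treating the shear channel ``the same way'' through the second-order zero-mode equation \eqref{eq:V0-radial-ode} uses a shear-only transverse model. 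The transverse zero mode of a slender horn is controlled by bending.
\end{itemize}

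The energy claim fails rigorously, whatever $H_0$ is. The energy density dominates $c\,\sigma_{ss}^2$. Apply your bound \eqref{eq:M-cauchy} with $s=\delta t$, $H_0=\delta hQ_0$ and $\delta\le\ell/2$:
\[
    \int_\delta^\ell\frac{(sQ_0-H_0)^2}{I(s)}\,\dd s
    =
    \frac{3Q_0^2}{2B^3}\,\delta^{3-3m}\int_1^{\ell/\delta}(t-h)^2t^{-3m}\,\dd t
    \ge
    \frac{2^{-3m}}{8B^3}\,Q_0^2\,\delta^{3-3m},
\]
uniformly in $h$ (use $\int_1^2(t-h)^2\,\dd t\ge 1/12$). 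Since $3-3m<1-m$, no field with $Q_0\ne0$ can satisfy $E\lesssim Q_0^2\delta^{1-m}$. Your step that the Saint-Venant field attains Cauchy equality is true, but for $Q_0\ne0$ it is the bending term that attains it.

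What can be proved is this. The laws $s^{-m}$, $s^{1-m}$, $\delta^{1-m}$ hold for $N_0$, and for the mean shear stress $\sigma_{sn}\approx Q_0/A$ taken by itself. A non-zero $Q_0$ belongs to the bending channel. With zero tip moment, $H_0=\delta Q_0$, one gets $\sigma\asymp s^{1-2m}$ for $s\ge2\delta$, $U\asymp s^{3-3m}$ and $E\asymp Q_0^2\delta^{3-3m}$. The qualitative conclusion that these branches are high-energy as $\delta\to0$ is unaffected.

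The paper's own proof makes the same omission: it simply posits $\sigma\asymp Q_0/A(s)$. So the defect is in the statement's shear-channel claims, which cannot be proved as written. You should restate them with the corrected scalings above.
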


\begin{proof}
The force and shear estimates follow from
$\sigma^{(N,Q)}\asymp N_0/A(s),Q_0/A(s)$ and $A(s)=2Bs^m$.
Integrating the corresponding strain gives the displacement scale
$s^{1-m}$, and integrating the section energy density gives
\eqref{eq:force-branch-energy}.  For bending,
$I(s)=2B^3s^{3m}/3$ and the curvature of the section rotation satisfies
$\theta'(s)\asymp H_0/I(s)=O(s^{-3m})$.  Hence
$\theta=O(s^{1-3m})$ and the transverse displacement is
$O(s^{2-3m})$.  The extreme bending stress is $O(s^{-2m})$, and the section
energy density is proportional to $H_0^2/I(s)=O(s^{-3m})$, which gives
\eqref{eq:moment-branch-energy}.
\end{proof}

\begin{remark}[weighted regularization]
The high-energy branches can be kept in a weighted space rather than in the
ordinary elastic energy space.  For $\beta\in\mathbb R$ define
\begin{equation}
    \mathcal H_\beta(\mathcal C_+)
    =
    \left\{
        \bm U:
        \int_{\mathcal C_+}
        s^\beta |e(\bm U)|^2\,\dd A<\infty
    \right\}.
    \label{eq:weighted-energy-space}
\end{equation}
The force and shear tip-load branches belong to $\mathcal H_\beta$ precisely
when
\begin{equation}
    \beta>m-1,
    \label{eq:force-weight-condition}
\end{equation}
while the pure moment branch belongs to $\mathcal H_\beta$ when
\begin{equation}
    \beta>3m-1 .
    \label{eq:moment-weight-condition}
\end{equation}
This is the natural functional setting for treating the tip-load branch as a
singular generalized solution.  It is not the same physical limit as the
non-destructive free-tip ridge; it is a renormalized, weighted formulation of
a truncated-tip or concentrated-load problem.  Weighted Korn inequalities and
asymptotic decompositions for elastic cusp domains provide the rigorous
background for such spaces
\cite{Nazarov2012WeightedKorn,KozlovNazarov2016ElasticSolidCusps,KozlovNazarov2018ElasticCusp}.
\end{remark}

The same conclusion is visible in the scalar horn model
\begin{equation}
    \frac{\dd}{\dd s}
    \left(
        b(s)\frac{\dd q}{\dd s}
    \right)
    +
    k^2b(s)q=0,
    \qquad b(s)=Bs^m ,
    \label{eq:horn-model}
\end{equation}
or
\begin{equation}
    q''+\frac{m}{s}q'+k^2q=0.
    \label{eq:horn-ode}
\end{equation}
Its two local branches are
\begin{equation}
    q_{\rm reg}(s)=q_0\left(1-\frac{k^2s^2}{2(m+1)}+O(s^4)\right),
    \qquad
    q_{\rm sing}(s)\sim Cs^{1-m}.
    \label{eq:q-branches}
\end{equation}
The singular branch has infinite energy:
\begin{equation}
    \int_0^\ell b(s)|q_{\rm sing}'(s)|^2\,\dd s=+\infty .
    \label{eq:horn-energy}
\end{equation}

\begin{proposition}[leading free-tip ridge field]
\label{prop:leading-ridge-field}
Within the local quasi-static, non-destructive free-tip class, the leading field in a free cuspidal ridge is asymptotically rigid:
\begin{equation}
    \bm U(s,n)=\bm a_0+\Omega J
    \begin{pmatrix}s\\ n\end{pmatrix}
    +\bm U_{\rm rem}(s,n),
    \label{eq:ridge-asymptotics}
\end{equation}
with no Williams $r^{-1/2}$ stress singularity in the leading class.
\end{proposition}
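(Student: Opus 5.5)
The plan is to combine the three ingredients already available for the ridge: the anisotropic blow-up of the inner problem, the sectional balance laws of Lemma~\ref{lem:ridge-resultants}, and the selection of the zero-resultant class by Proposition~\ref{prop:finite-strength-cutoff} together with the energy bound \eqref{eq:resultant-energy-lower-bound}.

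\emph{Step 1: cross-sectional reduction.} In the inner variables \eqref{eq:cusp-blowup-map}, the leading transverse operator \eqref{eq:leading-transverse} with the Neumann side conditions obtained from \eqref{eq:inner-xy-free-boundary} has only the zero mode. All non-zero modes of \eqref{eq:eta-neumann-pencil} carry the penalty $r^{-2m}\nu_k$ and are therefore exponentially or super-algebraically suppressed relative to the zero mode. The leading field is thus $\eta$-independent, $(U_0(r),V_0(r))$, and it solves the radial equations \eqref{eq:U0-radial-ode}--\eqref{eq:V0-radial-ode}. In the static limit $\eps_a=0$ each component has exactly two branches, a constant and $r^{1-m}$, just as in the scalar horn model \eqref{eq:q-branches}.

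\emph{Step 2: exclusion of the singular branch.} For the axial component, the $r^{1-m}$ branch produces $\sigma_{ss}\asymp r^{-m}$, that is, a non-zero constant resultant $N_0$. For the transverse component it produces a shear resultant $Q_0$, together with the associated bending flux $H_0$ through \eqref{eq:moment-balance}. By the cut-off bounds \eqref{eq:force-strength-bound}--\eqref{eq:moment-strength-bound}, these resultants must tend to zero in the non-destructive limit $\delta\to0$. Independently, the energy lower bound \eqref{eq:resultant-energy-lower-bound} diverges unless $N_0=Q_0=H_0=0$, since $m>1$ (compare \eqref{eq:horn-energy}). Within the free-tip class, therefore, only the constant branches survive, giving the translation $\bm a_0$.

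\emph{Step 3: the rotation and the remainder.} The leading constant pair is a translation. To capture the next admissible strain-free term, I would use the Korn-type decomposition: subtract the rigid motion $\bm a_0+\Omega J(s,n)^T$, which is the kernel of $e(\cdot)$ and therefore exactly the part carried by $W_{\rm rig}^a$ in \eqref{eq:inner-displacement-normalization}. I would then check that the remainder $\bm U_{\rm rem}$ has strains of the zero-mode scale \eqref{eq:inner-strain-stress-scale}, namely bounded by $O(\mu U_{\rm R}/\lambda_R)$. Higher corrections come from iterating the zero-mode equation with the longitudinal and inertial terms as sources. Each such source is smaller by powers of $\ell_a^{m-1}$ and $r^2$, so the regular branch $1-k^2 r^2/(2(m+1))+\dots$ generates bounded stresses.

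\emph{Step 4: absence of the Williams term.} The Williams $r^{-1/2}$ field requires a two-sided angular neighbourhood, that is, a domain of fixed opening angle (crack or notch) in which the polar blow-up yields a homogeneous operator pencil with an eigenvalue $1/2$. The ridge blow-up is anisotropic: as noted after \eqref{eq:blownup-laplacian}, the operator has no homogeneous factor. The only power-law solutions are therefore the horn branches of Step 1, with exponents $0$ and $1-m$, and no exponent $1/2$ occurs.

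The main obstacle is Step 1. Justifying that the non-zero transverse modes cannot generate a competing singular contribution, even formally, requires a uniform estimate on the $\eta$-dependent part, for instance a weighted Korn inequality of the type cited in the weighted-regularization remark. I would first attempt a formal Fourier expansion in $\phi_k(\eta)$ and show that each $k\ge1$ amplitude solves an equation dominated by $-r^{-2m}\nu_k$, which admits no power-law solutions. I would cite the rigorous cusp literature for the full statement rather than prove it.
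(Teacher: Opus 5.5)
Your proposal follows the paper's own argument essentially step for step: the transverse Neumann zero mode from the anisotropic blow-up; exclusion of the $N_0,Q_0,H_0$ channels by the sectional balance and the finite-strength cut-off, with the horn model and the energy bound \eqref{eq:resultant-energy-lower-bound} as corroboration; and the rigid translation and rotation as the surviving leading modes, with your Step~4 only making explicit what the paper leaves implicit. One minor point is that the rotation $\Omega J(s,n)^T$ is not a branch of the static zero-mode equations of Step~1 (its $V$-component is linear in $r$ and its $U$-component is linear in $\eta$), so it must enter, as you in effect do in Step~3, as an exact strain-free, traction-free solution rather than as one of the radial branches.
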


\begin{proof}
The blow-up analysis gives a transverse Neumann zero mode as the only leading finite mode.
The sectional balance shows that non-zero force and moment channels require a
finite terminal cut-off if stresses are bounded by a fixed material strength.
The scalar horn model gives the same admissibility condition for the averaged displacement.
The remaining leading modes are the rigid translations and rotation; strains enter through lower-order horn corrections and through $O(\eps_a^2)$ dynamic terms.
\end{proof}

\section{Cuspidal gorge as a zero-opening notch}

The local class of a gorge is different.
The removed cusp has half-width $b(s)=Bs^m$, and $b(s)/s\to0$.
Therefore the leading local geometry of the elastic body is a plane with a crack, not a horn.

\begin{assumption}
The outer Rayleigh-wave field is smooth on the inner scale and loads the gorge bottom through a bounded remote strain.
The local field belongs to the standard finite-energy crack-tip class for the static Lam\'e system.
\end{assumption}

Let $(r,\theta)$ be polar coordinates centered at the bottom of the gorge.
For plane strain set
\begin{equation}
    \kappa_\nu=3-4\nu .
    \label{eq:kappa-nu}
\end{equation}
Then the leading Williams expansion gives
\begin{equation}
    \bm U(r,\theta)
    =
    \bm a_0+\Omega J\bm X
    +
    \frac{K_I}{2\mu}\sqrt{\frac{r}{2\pi}}\Phi_I(\theta;\nu)
    +
    \frac{K_{II}}{2\mu}\sqrt{\frac{r}{2\pi}}\Phi_{II}(\theta;\nu)
    +
    O(r),
    \label{eq:gorge-displacement}
\end{equation}
and
\begin{equation}
    \sigma(r,\theta)
    =
    \frac{K_I}{\sqrt{2\pi r}}\Psi_I(\theta;\nu)
    +
    \frac{K_{II}}{\sqrt{2\pi r}}\Psi_{II}(\theta;\nu)
    +
    O(1).
    \label{eq:gorge-stress}
\end{equation}
The coefficients $K_I$ and $K_{II}$ are stress-intensity factors determined by matching to the outer Rayleigh-wave field.

The cusp shape changes the next terms.
Since the boundary displacement from a straight crack face is $b(r)=Br^m$, while the leading field varies on scale $r$, the relative correction is
\begin{equation}
    \frac{b(r)}{r}=Br^{m-1}.
    \label{eq:relative-cusp-correction}
\end{equation}
Therefore the first cusp correction to the crack-tip stress has the absolute order
\begin{equation}
    |\sigma_{\rm corr}|=O(r^{m-3/2}).
    \label{eq:cusp-correction}
\end{equation}
Figure~\ref{fig:local-scalings} summarizes the two local laws.

\begin{figure}[t]
    \centering
    \includegraphics[width=0.95\textwidth]{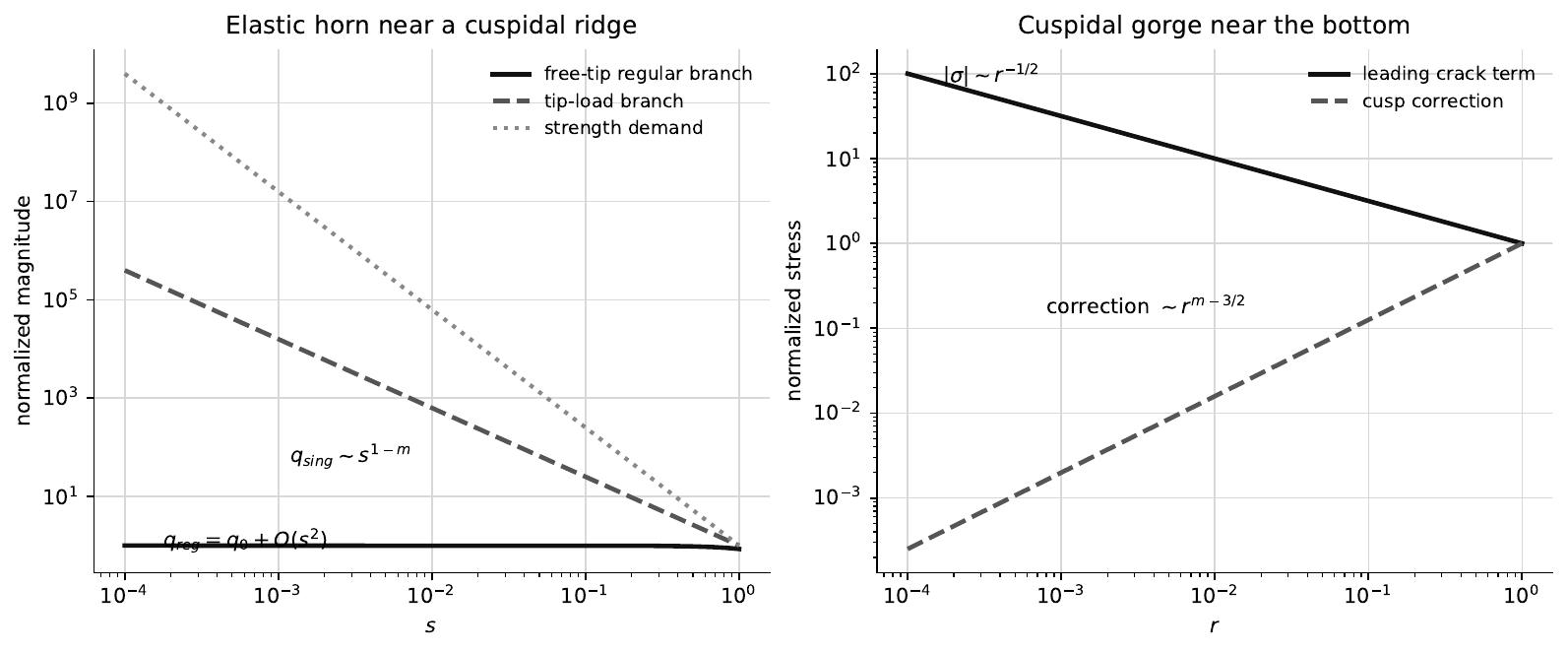}
    \caption{Model local laws. Left: in an elastic horn, a non-zero tip-load branch requires a finite cut-off if stresses are kept below a fixed material strength. Right: near the bottom of a cuspidal gorge the leading crack-tip stress behaves as $r^{-1/2}$, while the first cusp correction has order $r^{m-3/2}$.}
    \label{fig:local-scalings}
\end{figure}

\section{Matching to the outer Rayleigh-wave field}

We now embed the local results into the outer scattering problem.
The construction follows the standard rule of matched asymptotic expansions:
\[
    \text{outer field}+\text{inner field}-\text{common part}.
\]
This method is classical in boundary-layer and singular-perturbation theory \cite{Cole1968Perturbation,Ilin1992Matching,KevorkianCole1981}.

\begin{figure}[t]
    \centering
    \includegraphics[width=0.72\textwidth]{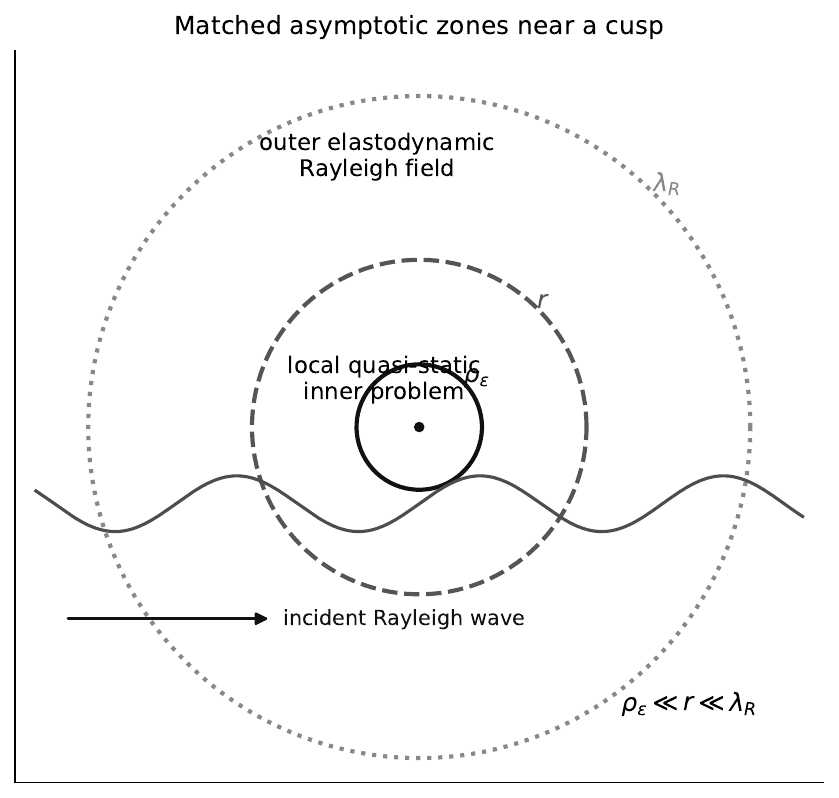}
    \caption{Matching structure. The outer wave region is governed by the Rayleigh-scale elastodynamic problem. The inner cusp region is quasi-static. In the overlap region the two expansions share a common part that must be subtracted in the composite approximation.}
    \label{fig:inner-outer-matching}
\end{figure}

The overlap region is non-empty, but its transverse scale depends on the local
geometry.  For a ridge we use the anisotropic horn variables
\[
    s=\ell_a r,\qquad n=B\ell_a^m r^m\eta,\qquad |\eta|\le1 .
\]
One possible overlap is
\begin{equation}
    1\ll r\ll \ell_a^{-1/2},\qquad |\eta|\le1 .
    \label{eq:matching-overlap-inner}
\end{equation}
Equivalently,
\begin{equation}
    \ell_a\ll s\ll \ell_a^{1/2},
    \qquad
    |n|\le Bs^m .
    \label{eq:matching-overlap-outer}
\end{equation}
In this region $a|s|^{\alpha-2}=r^{\alpha-2}\ll1$, so the outer expansion is
valid, while $s,n\to0$, so the outer field can still be Taylor-expanded at the
cusp.  The restriction $r\ll\ell_a^{-1/2}$ is only a convenient choice that
keeps the Taylor remainder smaller than the retained linear common part; many
intermediate choices give the same matching.

The normalized outer field restricted to the ridge horn is
\begin{equation}
    \mathcal U_{\rm out}^{(a),+}(r,\eta)
    =
    U_a^{-1}
    \left[
        U_{\rm out}(\ell_a r,B\ell_a^m r^m\eta)
        -
        U_{\rm rig}^{(a)}(\ell_a r,B\ell_a^m r^m\eta)
    \right].
    \label{eq:outer-field-inner-scale}
\end{equation}
In the overlap region it has the Taylor expansion
\begin{equation}
    \mathcal U_{\rm out}^{(a),+}(r,\eta)
    =
    A_{\rm out}^{(a)}
    +
    G_{\rm out}^{(a)}
    \begin{pmatrix}
        r\\
        B\ell_a^{m-1}r^m\eta
    \end{pmatrix}
    +
    O(\ell_a r^2).
    \label{eq:outer-taylor-inner-scale}
\end{equation}
Here
\begin{equation}
    G_{\rm out}^{(a)}
    =
    \Omega_{\rm out}^{(a)}J+E_{\rm out}^{(a)},
    \qquad
    E_{\rm out}^{(a)}=(E_{\rm out}^{(a)})^T .
    \label{eq:outer-gradient-split}
\end{equation}

For a ridge, the inner field has the far form
\begin{equation}
    \mathcal U_{\rm in}^{+}
    \sim
    A^+
    +
    \Omega^+J
    \begin{pmatrix}
        r\\
        B\ell_a^{m-1}r^m\eta
    \end{pmatrix}
    +
    \mathcal W_+^E(r,\eta;E^+)
    +
    \mathcal U_{\rm ev}^+,
    \label{eq:ridge-inner-far}
\end{equation}
where $\mathcal W_+^E$ is the regular horn corrector driven by the symmetric external strain and $\mathcal U_{\rm ev}^+$ contains exponentially small transverse modes.
Matching gives
\begin{equation}
    A^+=A_{\rm out}^{(a)},\qquad
    \Omega^+=\Omega_{\rm out}^{(a)},\qquad
    E^+=E_{\rm out}^{(a)}.
    \label{eq:ridge-matching-rigid}
\end{equation}
The common part is
\begin{equation}
    \mathcal U_{\rm com}^{+}(r,\eta)
    =
    A_{\rm out}^{(a)}
    +
    G_{\rm out}^{(a)}
    \begin{pmatrix}
        r\\
        B\ell_a^{m-1}r^m\eta
    \end{pmatrix}.
    \label{eq:ridge-common-part}
\end{equation}
No crack stress-intensity factor is produced by the leading ridge problem.

For a gorge the local limit is a crack, and the usual isotropic inner
variables $s=\ell_a x$, $n=\ell_a y$ are appropriate.  The overlap is
\[
    1\ll \sqrt{x^2+y^2}\ll\ell_a^{-1/2},
\]
and the normalized outer field has the same Taylor form as
\eqref{eq:outer-taylor-inner-scale} with
$(r,B\ell_a^{m-1}r^m\eta)$ replaced by $(x,y)$.
For a gorge, the inner expansion is
\begin{equation}
\begin{aligned}
    \mathcal U_{\rm in}^{-}
    &=
    A^{-}
    +
    \Omega^{-}J
    \begin{pmatrix}x\\y\end{pmatrix}
    +
    E^{-}
    \begin{pmatrix}x\\y\end{pmatrix}
    \\
    &\quad+
    \frac{\mathsf K_I}{2\mu}\sqrt{\frac{\varrho}{2\pi}}\Phi_I(\theta;\nu)
    +
    \frac{\mathsf K_{II}}{2\mu}\sqrt{\frac{\varrho}{2\pi}}\Phi_{II}(\theta;\nu)
    +
    \mathcal U_{\rm corr}^{-}.
\end{aligned}
    \label{eq:gorge-inner-expansion}
\end{equation}
The stress expansion contains
\begin{equation}
    \widehat\sigma_{\rm in}^{-}
    =
    \frac{\mathsf K_I}{\sqrt{2\pi\varrho}}\Psi_I
    +
    \frac{\mathsf K_{II}}{\sqrt{2\pi\varrho}}\Psi_{II}
    +
    O(1)+O(\varrho^{m-3/2}).
    \label{eq:gorge-inner-stress-expansion}
\end{equation}
The regular coefficients match as before, while the crack amplitudes are determined by a linear matching operator,
\begin{equation}
    \begin{pmatrix}
        \mathsf K_I\\
        \mathsf K_{II}
    \end{pmatrix}
    =
    \mathcal K_-\left(E_{\rm out}^{(a)},\lambda,\mu,\text{orientation}\right)
    +
    O(\eps_a^2).
    \label{eq:gorge-matching-K}
\end{equation}

The composite approximations are
\begin{equation}
    \mathcal U_{\rm unif}^{+}
    =
    \mathcal U_{\rm out}^{(a),+}
    +
    \mathcal U_{\rm in}^{+}
    -
    \mathcal U_{\rm com}^{+},
    \label{eq:ridge-composite}
\end{equation}
and
\begin{equation}
    \mathcal U_{\rm unif}^{-}
    =
    \mathcal U_{\rm out}^{(a),-}
    +
    \mathcal U_{\rm in}^{-}
    -
    \mathcal U_{\rm com}^{-}.
    \label{eq:gorge-composite}
\end{equation}
The same rule applies to stresses:
\begin{equation}
    \sigma_{\rm unif}^{\pm}
    =
    \sigma_{\rm out}^{\pm}
    +
    \sigma_{\rm in}^{\pm}
    -
    \sigma_{\rm com}^{\pm}.
    \label{eq:uniform-stress-composite}
\end{equation}

\section{Rounded cusp and stress cut-off}

For numerical and physical purposes the mathematical cusp is rounded.
Consider the regularized graph
\begin{equation}
    s_\delta(n)
    =
    A(n^2+\delta^2)^{\alpha/2}
    -
    A\delta^\alpha,
    \qquad 0<\alpha<1.
    \label{eq:regularized-graph}
\end{equation}
At $n=0$,
\begin{equation}
    s_\delta(n)
    =
    \frac{A\alpha}{2}\delta^{\alpha-2}n^2
    +
    O(\delta^{\alpha-4}n^4),
    \label{eq:regularized-expansion}
\end{equation}
and the local radius of curvature is
\begin{equation}
    \rho_\delta\sim\frac1{A\alpha}\delta^{2-\alpha}.
    \label{eq:rho-delta}
\end{equation}

\begin{corollary}[gorge cut-off]
If a cuspidal gorge has the crack-tip stress \eqref{eq:gorge-stress} and its bottom is rounded with curvature radius $\rho_\delta$, then
\begin{equation}
    \norm{\sigma}_{\max}
    \asymp
    \frac{|K|}{\sqrt{\rho_\delta}}
    \asymp
    C|K|\delta^{-(2-\alpha)/2},
    \qquad
    |K|=(K_I^2+K_{II}^2)^{1/2}.
    \label{eq:stress-cutoff}
\end{equation}
\end{corollary}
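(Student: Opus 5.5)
The proof combines the Williams stress expansion \eqref{eq:gorge-stress} with the classical notch–crack correspondence (Creager–Paris type), in the spirit of the matched-asymptotic construction of the previous section. The small parameter is $\rho_\delta$, which by \eqref{eq:rho-delta} is much smaller than the inner scale on which \eqref{eq:gorge-stress} holds. So we introduce a second, tip-scale inner region with variables $\bm X=\bm X_{\rm tip}+\rho_\delta\bm\xi$, and in it we pose the static Lam\'e problem.

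\emph{Step 1: the tip-scale problem.} By \eqref{eq:regularized-expansion}, the rounded bottom is, in the $\bm\xi$ variables, a parabola with unit radius of curvature at the apex. The deviation from the parabola is relatively $O(\delta^{-2}n^2)$. It is negligible for $|\bm\xi|=O(1)$, provided that $\rho_\delta\ll\delta$, which holds because $2-\alpha>1$. Far from the apex, $|\bm\xi|\to\infty$, this parabolic notch looks like a semi-infinite crack.

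\emph{Step 2: loading and matching.} The far-field condition of the tip problem is taken to be the Williams field \eqref{eq:gorge-displacement} with the given $K_I,K_{II}$, written in the $\bm\xi$ variables. The traction-free Lam\'e problem in the exterior of a parabola has a unique finite-energy-perturbation solution with this far field; it is given explicitly by the parabolic-notch (Neuber/Creager–Paris) solution. Since the problem is linear and homogeneous under scaling, the displacement on the tip scale is $\sqrt{\rho_\delta}\,\mathcal V(\bm\xi;K_I,K_{II})/(2\mu)$. Differentiating in $\bm X$ then gives stresses $\sigma=\rho_\delta^{-1/2}\,\widehat\sigma(\bm\xi)$, where $\widehat\sigma$ is linear in $(K_I,K_{II})$ and bounded. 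Taking the maximum over $\bm\xi$ yields $\norm{\sigma}_{\max}\asymp|K|/\sqrt{\rho_\delta}$; for mode I the known value is $2K_I/\sqrt{\pi\rho_\delta}$. The lower bound needs the unit-normalized solution map to have a nonzero maximum for every direction of $(K_I,K_{II})$. This holds because a nonzero far field cannot produce zero stress everywhere, and compactness of the unit circle of directions makes the constant uniform. The remainder terms $O(1)$ and $O(\varrho^{m-3/2})$ in \eqref{eq:gorge-inner-stress-expansion} stay bounded, or are smaller, on the tip scale, so they do not affect the order.

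\emph{Step 3: substitution.} Inserting \eqref{eq:rho-delta} gives $\rho_\delta^{-1/2}\asymp(A\alpha)^{1/2}\delta^{-(2-\alpha)/2}$, which is \eqref{eq:stress-cutoff} with $C$ depending on $A,\alpha,\nu$.

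The main obstacle is Step 1: justifying that the rounding perturbs the $K$-field only through the scale $\rho_\delta$. This requires a scale separation $\rho_\delta\ll\delta\ll\ell_a$ between the rounding region and the region of validity of \eqref{eq:gorge-stress}. One must also handle the fact that the rounding \eqref{eq:regularized-graph} deforms the gorge profile up to distance $\delta$, not just $\rho_\delta$. The plan is to treat the annulus $\rho_\delta\ll|\bm X|\ll\delta$ as an overlap region where both expansions hold. If that overlap fails, we fall back to the cruder bound: the stress is at most the value of \eqref{eq:gorge-stress} evaluated at distance $\rho_\delta$, which gives the same order.
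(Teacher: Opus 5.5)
Your argument is correct and is the reasoning the paper relies on: the corollary is stated without a separate proof, and it amounts to evaluating the Williams $r^{-1/2}$ stress \eqref{eq:gorge-stress} at the root radius $\rho_\delta$ and inserting \eqref{eq:rho-delta}. Your tip-scale parabolic-notch problem makes that cut-off precise and also supplies the lower bound.

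One correction to your obstacle paragraph. The rounding \eqref{eq:regularized-graph} alters the profile out to transverse distance $\delta$, but along the axis it reaches distance $\sim A\delta^{\alpha}\gg\delta$, and for $|n|\gg\delta$ it displaces the walls axially by $A\delta^{\alpha}$. So the operative condition is not $\delta\ll\ell_a$. What matters is that the rounded notch stays slender (crack-like) at every distance $\gg\rho_\delta$ from its apex. This keeps $K_I,K_{II}$ unchanged at leading order and makes your fallback, which gives only an upper bound, unnecessary.
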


\begin{corollary}[ridge cut-off]
For the leading non-destructive free-tip ridge branch, rounding the tip does
not produce a universal crack-like law $\delta^{-1/2}$.
Any dependence on $\delta$ must come from lower-order geometry, global
focusing, resonance, or from a force/moment resultant that is physically
supported by a finite cut-off.
\end{corollary}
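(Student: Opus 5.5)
The plan is to argue by contradiction against a crack-like rounding law. Any $\delta$-dependence that could appear must then be traced to the channels the statement lists. I would start from Proposition~\ref{prop:leading-ridge-field}: in the non-destructive free-tip class the leading inner field is $\bm a_0+\Omega J(s,n)^T$ plus a remainder. The remainder is driven only by the matched outer symmetric strain $E^+$ through the regular horn corrector $\mathcal W_+^E$ in \eqref{eq:ridge-inner-far}. Its stresses are therefore bounded by $O(\mu U_{\rm R}/\lambda_R)$, by the inner scaling \eqref{eq:inner-strain-stress-scale}. Matching \eqref{eq:ridge-matching-rigid} produces no stress-intensity factor, so no $r^{-1/2}$ term is available whose amplitude could be amplified by rounding.

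The next step is to treat the rounded tip, with profile \eqref{eq:regularized-graph} and curvature radius $\rho_\delta$, as a perturbation of the truncated horn $\mathcal C_{+,\delta}$. Away from the rounded cap, i.e.\ for $s\gg\delta$, the geometry is the ideal horn to leading order. By Lemma~\ref{lem:ridge-resultants}, any field there decomposes into a zero-resultant part and the resultant channels $N_0,Q_0,H_0$. Since the cap is traction-free, applying the divergence theorem over the cap shows that the resultants through the section $s=\delta$ are exactly those transmitted from the cap. A free cap carries none unless a load is applied there, so $N_0=Q_0=H_0=0$ in the free-tip branch. The Saint-Venant branches \eqref{eq:force-singular-stresses}--\eqref{eq:bending-singular-stress}, which would give $\delta^{-m}$ and $\delta^{-2m}$ laws, are then absent. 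They re-enter only when a resultant is physically supported, as in Proposition~\ref{prop:finite-strength-cutoff}.

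Inside the cap, the zero-resultant field is self-equilibrated, and its data at $s\approx\delta$ are bounded. I would rescale the cap isotropically by $\delta$ and use the local elliptic estimate for the Lamé system with traction-free smooth boundary. This bounds the cap stresses by the stresses entering through the section $s=\delta$. What remains is $O(1)$ plus terms of relative order $\delta^{m-1}$ from the horn geometry via \eqref{eq:inner-xy-free-boundary}, plus $O(\eps_a^2)$ dynamic terms. Neither gives $\delta^{-1/2}$. Any residual $\delta$-dependence is thus lower-order geometry or comes from the outer amplitude $E_{\rm out}^{(a)}$, which global focusing or resonance may enlarge. The latter is independent of the local rounding law.

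The main obstacle is the uniformity of the elliptic estimate in the cap. Under isotropic rescaling the cap is a slender region of aspect ratio $\delta^{m-1}$, not a fixed domain. I would first try to apply the estimate in the anisotropic variables \eqref{eq:cusp-blowup-map}. There the Neumann transverse structure \eqref{eq:eta-neumann-pencil} gives a spectral gap $\nu_1=\pi^2/4$, so nonzero transverse modes decay exponentially toward the cap. This controls the self-equilibrated part uniformly in $\delta$.
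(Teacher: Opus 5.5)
Your argument is correct at the paper's formal level. Its first paragraph is exactly how the paper obtains the corollary, which is stated without a separate proof as a consequence of Proposition~\ref{prop:leading-ridge-field}, the matching result \eqref{eq:ridge-matching-rigid} (no stress-intensity factor for the ridge) and Proposition~\ref{prop:finite-strength-cutoff}. With no $K$, there is nothing for the rounding to amplify in the way $|K|/\sqrt{\rho_\delta}$ does for the gorge.

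Your later steps give a genuinely different justification. You remove the resultant channels by equilibrium of the traction-free rounded cap, so $N_0=Q_0=H_0=0$. This holds up to inertial resultants of order $s^{m+1}$, i.e.\ mean stress $O(s)$, as in \eqref{eq:distributed-load-resultant}. The paper instead removes them through the strength bounds of Proposition~\ref{prop:finite-strength-cutoff} and the energy bound \eqref{eq:resultant-energy-lower-bound}, in the limit of truncated horns. Your route is self-contained on the actual rounded geometry and needs no strength level $\sigma_\ast$. It makes the free-tip versus loaded-tip dichotomy a matter of boundary conditions. The paper's route applies to the ideal cusp itself and quantifies the minimal cut-off that a prescribed resultant requires, which is what the final clause of the corollary refers to. Your cap estimate also supplies something the paper leaves implicit: that the rounding itself creates no new stress concentration.

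If you carry that estimate out, use the scales of \eqref{eq:regularized-graph}, where $\delta$ is a transverse parameter. The cap there has length of order $A\delta^\alpha$, half-width $\delta$ and apex radius $\rho_\delta\sim\delta^{2-\alpha}\ll\delta$. Consequently:
\begin{itemize}
\item the horn geometry holds for $s\gg A\delta^\alpha$ rather than $s\gg\delta$, and the geometric corrections at the junction are $O(\delta^{1-\alpha})$;
\item the transverse variable should be built on the actual width, which is $\sqrt{2\rho_\delta s}$ in the parabolic part, not on $Bs^m$;
\item the isotropic rescaling belongs at scale $\rho_\delta$ near the apex.
\end{itemize}
Finally, for the Lam\'e system the decay of self-equilibrated states is governed by the Papkovich--Fadle exponents of the traction-free strip, not by the scalar Neumann gap $\pi^2/4$. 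These exponents are also bounded away from zero once rigid and resultant-carrying modes are removed, so your conclusion stands.
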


\begin{figure}[t]
    \centering
    \includegraphics[width=0.72\textwidth]{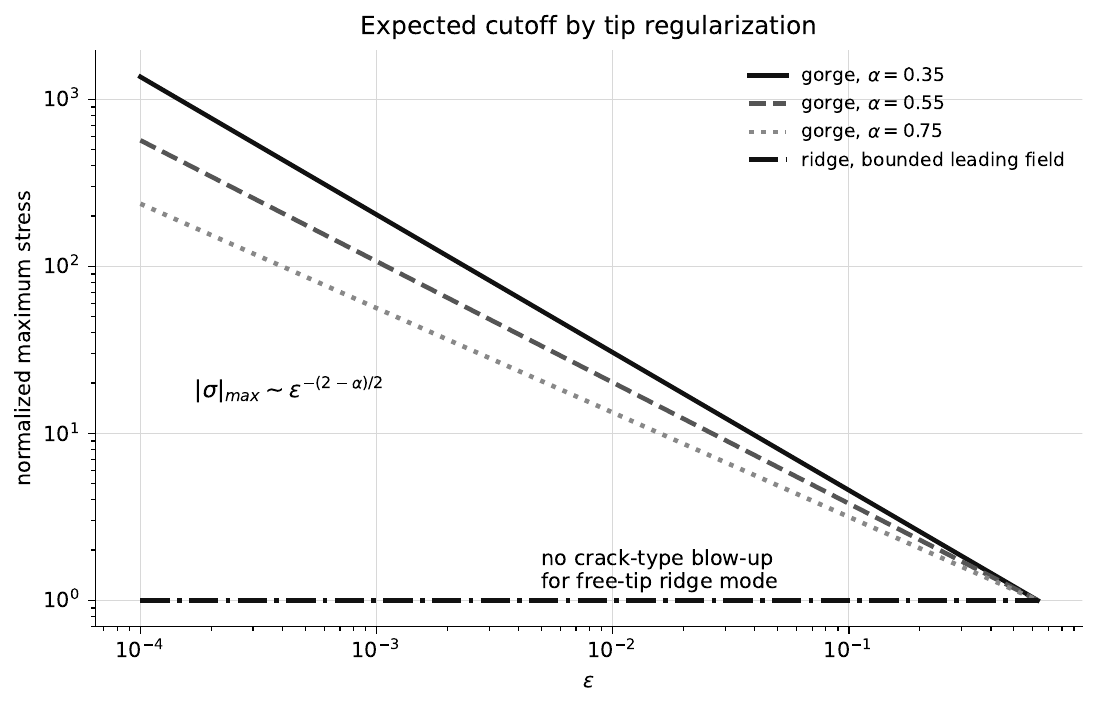}
    \caption{Expected rounding law. A cuspidal gorge inherits the crack cut-off $\|\sigma\|_{\max}\sim\delta^{-(2-\alpha)/2}$, whereas the leading non-destructive free-tip ridge branch remains non-crack-like.}
    \label{fig:regularization}
\end{figure}

\section{Numerical checks of first corrections}

Before solving a full finite-element problem, it is useful to check the reduced first-correction mechanisms numerically.
The first test concerns the matching variable.
For the modulated outer matching, the curvature term is controlled by
$as^{\alpha-2}$; after the substitution $s=\ell_a r$, this becomes the common
inner tail $r^{\alpha-2}$.
Figure~\ref{fig:outer-inner-first-correction} shows this inner transition
and compares the singular common tail with a bounded inner model correction
having the same far-field behaviour.
For the parameters used in the plot, the relative mismatch between the bounded
inner correction and its common tail is $1.29\times10^{-1}$ at $r=10$,
$4.64\times10^{-2}$ at $r=30$, and $1.43\times10^{-2}$ at $r=100$.

\begin{figure}[t]
    \centering
    \includegraphics[width=0.95\textwidth]{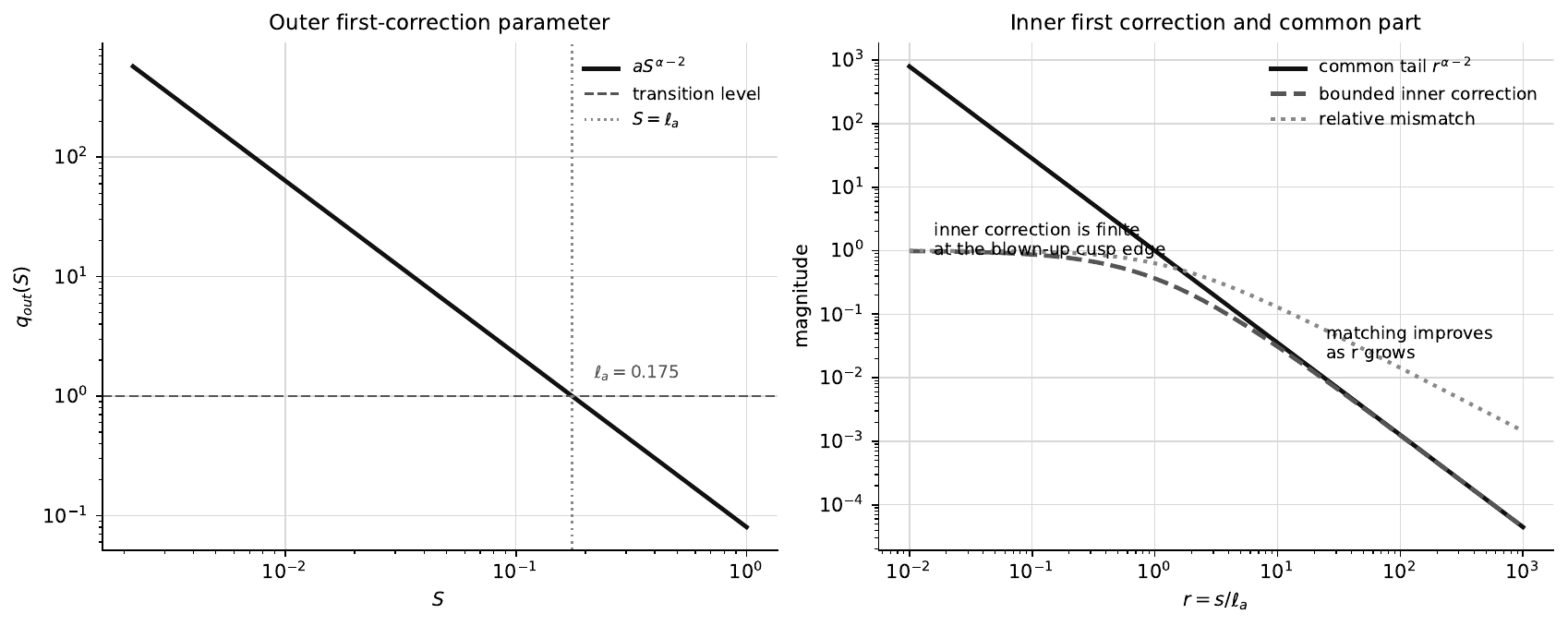}
    \caption{Numerical illustration of the first-correction matching. Left: the curvature parameter $as^{\alpha-2}$ reaches order one at the inner scale $s=\ell_a$. Right: in the longitudinal inner variable, the common tail $r^{\alpha-2}$ is recovered by a bounded inner correction as $r\to\infty$.}
    \label{fig:outer-inner-first-correction}
\end{figure}

The second test uses the scalar horn equation for the non-destructive free-tip ridge branch and the crack-tip model for the gorge.
For the horn equation, the numerical regular branch is compared with the first correction of the $\eps$-expansion; the maximum error scales as $\eps^{3.9968}$ in the tested range, consistent with the expected $O(\eps^4)$ remainder.
For the gorge model, the local log--log slope of the combined stress tends to $-0.4993$ at small radius, consistent with the leading $r^{-1/2}$ Williams term.
The corresponding plots are shown in Fig.~\ref{fig:inner-first-correction-models}.

\begin{figure}[t]
    \centering
    \includegraphics[width=0.95\textwidth]{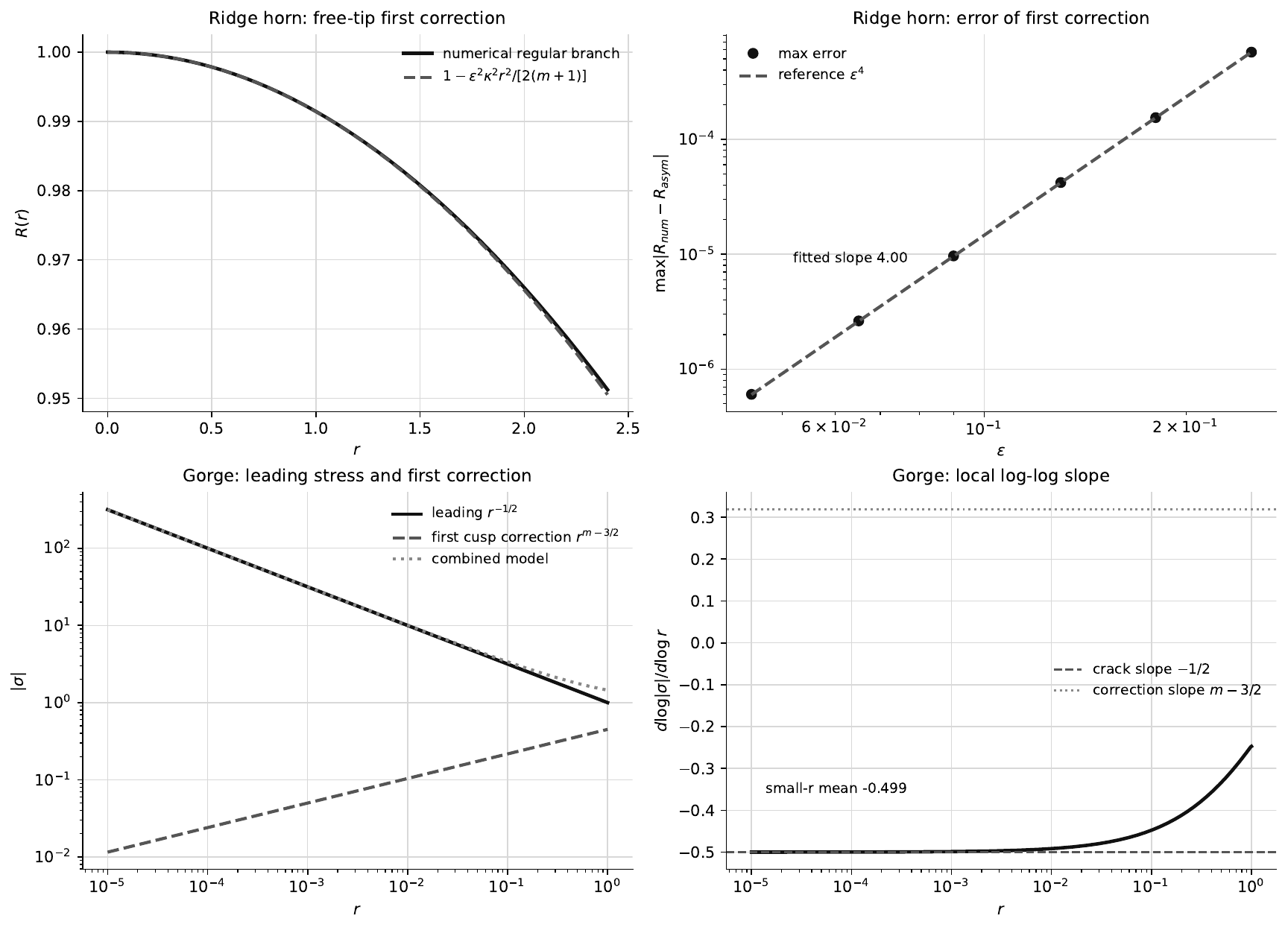}
    \caption{Reduced numerical checks of first corrections. Top: the regular free-tip horn branch agrees with its first $\eps^2$ correction and the error scales as $O(\eps^4)$. Bottom: the gorge stress model combines the crack-tip term with the first cusp correction; the small-radius local slope approaches $-1/2$.}
    \label{fig:inner-first-correction-models}
\end{figure}

\section{Finite-element verification of the local exponents}

The local predictions were checked with a simple two-dimensional finite-element computation for the static Lam\'e system.
Linear triangular elements were used, with $\lambda=\mu=1$ and $m=2.4$.
The ridge mesh is generated by mapping a rectangle in $(s,\eta)$ to $|n|<Bs^m$.
The gorge mesh is assembled from a lower block and two side blocks representing the material outside the cuspidal notch.
The cusp sides are traction-free in both tests.

For the ridge, the test is designed to excite the non-destructive free-tip branch rather than a force/moment channel supported by a finite terminal cut-off.
A distributed body load is applied while the remote section is fixed.
This choice is the reason why the high-energy branch is not seen in the
computed ridge profiles.  If $f_0$ is a bounded body force density, then the
resultant transmitted through the section $s$ is of order
\begin{equation}
    N_f(s)
    \asymp
    \int_0^s b(\xi)f_0\,\dd \xi
    =
    O(s^{m+1}).
    \label{eq:distributed-load-resultant}
\end{equation}
The corresponding mean stress is $N_f(s)/A(s)=O(s)$ and therefore vanishes at
the tip.  Thus the numerical ridge test approximates the zero-tip-load,
bounded-stress limit.
For a truncated ridge with lower cut-off $s=\rho$, the $95$th percentile of $|\sigma|$ is measured in the zone $2\rho<s<4\rho$.
The fitted log--log slope is approximately $+1.02$, so the stress does not grow as $\rho\to0$ in this free-tip test.

A second ridge computation was made for the general high-energy branch of
Proposition~\ref{prop:high-energy-tip-load-branch}.
The same horn was truncated at $s=\rho$, the remote section was fixed, and a
fixed non-zero resultant force was distributed over the terminal section
$s=\rho$.
This is a different boundary-value problem: the total force is kept fixed
while the available cross-sectional area is $A(\rho)=2B\rho^m$.
The predicted stress and energy laws are therefore
\[
    |\sigma|_{\rm tip}\asymp \rho^{-m},
    \qquad
    E(\rho)\asymp \rho^{1-m}.
\]
For $m=2.4$, the finite-element slopes are $-2.46$ for the near-tip stress
and $-1.41$ for the total elastic energy, close to the theoretical values
$-2.4$ and $-1.4$.
The same code was then rerun for $m=1.8$ and $m=2.8$.
For the forced ridge branch the measured stress slopes are $-1.85$ and $-2.87$,
while the energy slopes are $-0.87$ and $-1.80$; these values track the laws
$-m$ and $1-m$ in the expected way.
For the gorge the fitted slope remains essentially unchanged, between $-0.495$
and $-0.494$, confirming that the leading crack exponent does not depend on $m$.
The comparison is shown in Fig.~\ref{fig:fem-tip-load-branch}.

For the gorge, the outer boundary is loaded by a remote horizontal strain that opens the notch.
The $95$th percentile of $|\sigma|$ is measured in rings around the bottom.
The fitted slope in the intermediate range is $-0.495$, in close agreement with the crack-tip law $r^{-1/2}$.
The comparison is summarized in Table~\ref{tab:fem-comparison}.

\begin{table}[t]
    \centering
    \caption{Comparison of local asymptotics and FEM observations.}
    \label{tab:fem-comparison}
    \begin{tabular}{@{}p{0.25\textwidth}p{0.39\textwidth}p{0.26\textwidth}@{}}
        \toprule
        Geometry & Asymptotic prediction & FEM observation\\
        \midrule
        Free-tip ridge & no universal $r^{-1/2}$ growth & cut-off slope $+1.02$\\
        Ridge with fixed tip force & $|\sigma|\asymp \rho^{-m}$, $E\asymp\rho^{1-m}$ & slopes $-2.46$, $-1.41$\\
        Gorge & $|\sigma|\asymp r^{-1/2}$ & profile slope $-0.495$\\
        \bottomrule
    \end{tabular}
\end{table}

\begin{figure}[t]
    \centering
    \includegraphics[width=0.95\textwidth]{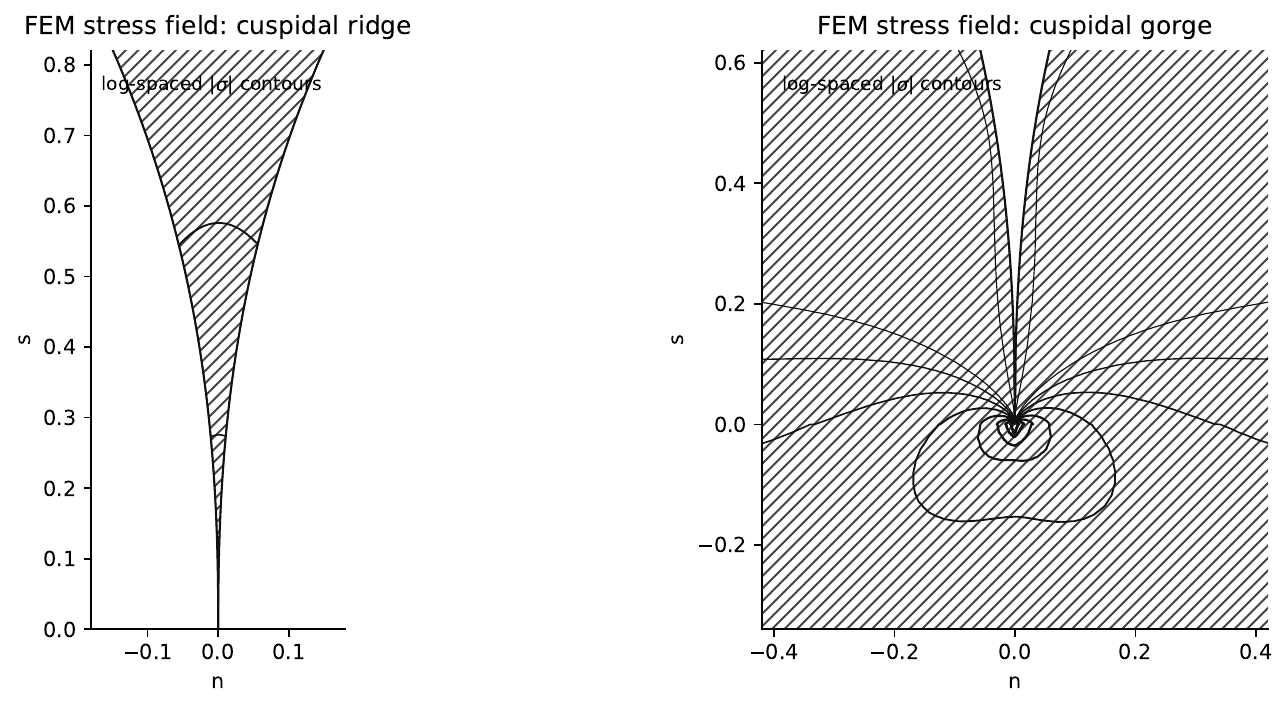}
    \caption{Local FEM stress fields for the cuspidal ridge and gorge. The material region is hatched, and contour lines show logarithmic levels of $|\sigma|$.}
    \label{fig:fem-stress-fields}
\end{figure}

\begin{figure}[t]
    \centering
    \includegraphics[width=0.95\textwidth]{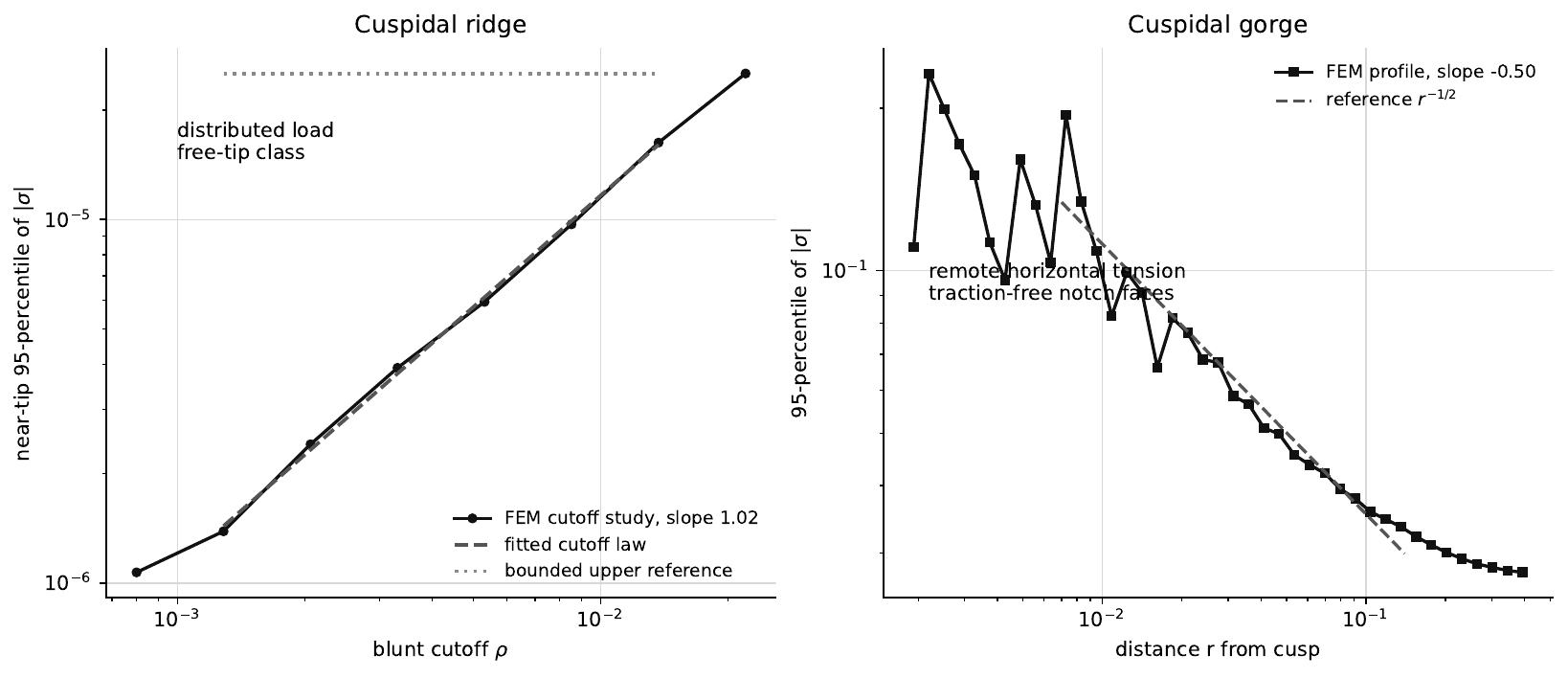}
    \caption{Numerical scaling check. The non-destructive free-tip ridge test does not show crack-like growth as the cut-off decreases. The gorge profile has slope close to $-0.5$.}
    \label{fig:fem-scaling}
\end{figure}

\begin{figure}[t]
    \centering
    \includegraphics[width=0.95\textwidth]{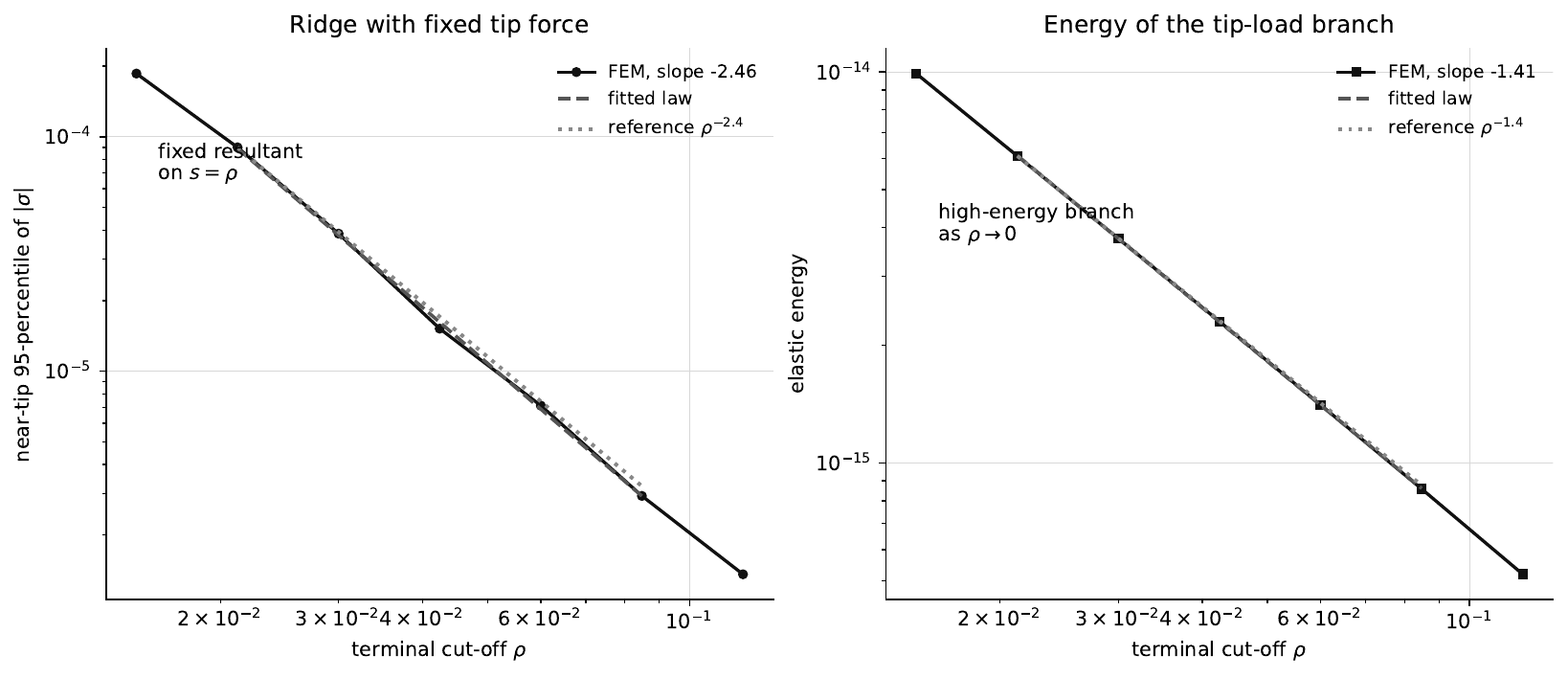}
    \caption{Numerical scaling check for the general high-energy ridge branch. A fixed non-zero resultant is applied to the terminal section $s=\rho$. The stress slope is close to $-m$, and the elastic-energy slope is close to $1-m$.}
    \label{fig:fem-tip-load-branch}
\end{figure}

These calculations are not full simulations of Rayleigh-wave scattering by a finite mountain shape.
They check the local elliptic mechanism that must be matched to the outer wave field.
For the ridge, arbitrary non-rigid boundary data can excite a force or moment
channel that is physically supported only by a finite terminal cut-off.  Such
growth should be interpreted as a boundary-loading artefact or as a different
truncated-tip problem, not as a universal crack singularity of the free ridge.

\section{Scope and diagnostics}

The analysis is local, two-dimensional and linearly elastic; it does not
include attenuation, plasticity, fracture growth, gravity, layering, finite
amplitudes, or the global radiation problem for a finite topographic object.
Its output is the local asymptotic structure and the matching data needed by
such a global problem.

The classification gives practical diagnostics for wave solvers.  For a
cuspidal ridge, mesh refinement should not reveal a universal crack-like law
in the leading non-destructive free-tip class; if a robust $\delta^{-1/2}$ law
appears, concentrated loads, artificial constraints, finite tip truncation,
geometry errors, or global resonance should be checked.  For a cuspidal gorge,
the crack law is expected until it is cut off by rounding, and the coefficients
$K_I,K_{II}$ describe how the outer Rayleigh field loads the notch bottom.

The least closed part is the fully vectorial ridge result.
The sectional resultant balance and the scalar horn model rigorously identify
the force and moment channels that require a finite terminal cut-off under a
fixed material strength.
The statement that the complete leading vector field is asymptotically rigid is used as a formal horn asymptotic result supported by Korn-type reasoning.
A fully rigorous theorem would require weighted Korn inequalities and a complete asymptotic decomposition for the two-dimensional Lam\'e system in a cusp horn.
This is a different mathematical problem from the applied matched-asymptotic construction pursued here.

\section{Conclusions}

The local asymptotic analysis gives the following conclusions.

\begin{enumerate}
    \item The Rayleigh wavelength defines the outer wave scale.  Away from the cusp the surface is nearly flat for small $a=A\lambda_R^{\alpha-1}$, but the effective Rayleigh wavelength parameter must be corrected order by order.
    \item The modulated Rayleigh-scale expansion first loses uniformity when $a|s|^{\alpha-2}=O(1)$, giving the longitudinal inner scale $\ell_a=a^{1/(2-\alpha)}$.
    \item For a ridge, the matching inner variables are anisotropic: $s=\ell_a r$ and $n=B\ell_a^m r^m\eta$.  On this scale the leading equation is static; inertia is an $O(\eps_a^2)$ correction with $\eps_a=2\pi\ell_a$.
    \item A cuspidal ridge is a vanishing-width elastic horn.  With a fixed admissible stress level, non-zero force and moment resultants can be carried only by a horn with a finite terminal cut-off.
    \item The leading non-destructive free-tip ridge field is asymptotically rigid and does not contain a Williams $r^{-1/2}$ stress singularity.
    \item A cuspidal gorge is a zero-opening re-entrant notch.  Its leading field is the crack-tip Williams field with $|\sigma|\asymp r^{-1/2}$.
    \item The cusp exponent enters the gorge through the correction $O(r^{m-3/2})$ and through the rounding law $\|\sigma\|_{\max}\asymp\delta^{-(2-\alpha)/2}$.
    \item The stress-intensity factors are not local geometric constants; they are determined by matching to the outer Rayleigh-wave field.
    \item The composite approximation is obtained from the rule $U_{\rm out}+U_{\rm in}-U_{\rm com}$.
    \item Local FEM calculations support the predicted ridge--gorge distinction.
\end{enumerate}

\section*{Declaration of generative AI and AI-assisted technologies in the writing process}

During the preparation of this manuscript, generative AI-assisted writing and coding tools were used for language editing, structuring assistance, code commenting, and symbolic/numerical consistency checks. The author critically reviewed and approved all content, including mathematical derivations, interpretations, code, numerical results, and references.
The AI tools were not used as a source of primary scientific data and are not authors of the work.
The author is fully responsible for the content, correctness and final form of the manuscript.

\end{document}